\newcommand{\ict}{\ti{IC3}\xspace}
\newcommand{\data}{\mbox{$\mi{Data}$}\xspace}
\newcommand{\fifo}{\mbox{$\mi{Fifo}$}\xspace}
\newcommand{\cv}{\mbox{$\mi{CAV02}$-$\mi{QE}$}\xspace}
\newcommand{\Eg}{\mbox{$\mi{EG}$-$\mi{PQE}$}\xspace}
\newcommand{\egp}{\mbox{$\mi{EG}$-$\mi{PQE}^+$}\xspace}
\newcommand{\dpqe}{$\mi{DS}$-$\mi{PQE}$\xspace}
\newcommand{\cad}{\mbox{$\mi{CADET}$}\xspace}
\newcommand{\olds}[4]{\mbox{$(\prob{#1}{#2},\pnt{#3}\,)\rightarrow #4$}}
\newcommand{\bm}[1]{{\mbox{\boldmath $#1$}}}
\newcommand{\Bm}[1]{{\boldmath $#1$}}
\newcommand{\di}[1]{\mbox{$\mi{Diam}(#1)$}\xspace}
\newcommand{\pl}{\mbox{$\mi{Plg}$}\xspace}
\newcommand{\spe}{\mbox{\Sub{P}{\!agg}}\xspace}
\newcommand{\sspe}{\mbox{$P^*_{\mi{agg}}$}\xspace}
\newcommand{\imp}{\Rightarrow} 
\newcommand{\ie}{i.e.,\xspace}
\newcommand{\eg}{e.g.,\xspace}
\newcommand{\pnt}[1]{\mbox{$\vv{#1}$}\xspace}
\newcommand{\ppnt}[2]{\mbox{$\vv{#1}_{#2}$}}
\newcommand{\pqnt}[2]{\mbox{$\vv{#1}\!_{\mi{#2}}$}}
\newcommand{\pent}[2]{\mbox{$\vv{#1}\!_{#2}$}}
\newcommand{\Cof}[2]{\mbox{$#1^{\vec{#2}}$}}
\newcommand{\cof}[2]{\mbox{$#1_{\vec{#2}}$}}
\newcommand{\V}[1]{\mbox{$\mathit{Vars}(#1)$}}
\newcommand{\Va}[1]{\mbox{$\mi{Vars}(\vec{#1})$}}
\newcommand{\s}[1]{\mbox{$\{#1\}$}}
\newcommand{\nGz}[2]{$G_{non-\{z\}}$}
\newcommand{\prr}[1]{\mi{Prev}(\boldsymbol{q})}
\newcommand{\nf}[1]{\normalfont{#1}}
\newcommand{\mi}[1]{\mathit{#1}}
\newcommand{\ti}[1]{\textit{#1}}
\newcommand{\tb}[1]{\textbf{#1}}
\newcommand{\ttt}{\>\>\>}
\newcommand{\Tt}{\>\>}
\newcommand{\Sub}[2]{\mbox{$\mi{#1}_{\mi{#2}}$}}
\newcommand{\prob}[2]{\mbox{$\exists{#1} [#2]$}}
\newcommand{\Prob}[3]{\mbox{$\exists{#1}\exists{#2}[#3]$}}
\newcommand{\Comment}[1]{}
\newcommand{\Abs}[1]{\mbox{$X_{#1}$}}
\begin{document}

\title{Partial Quantifier Elimination And Property Generation}


\author{Eugene Goldberg}
\institute{\email{eu.goldberg@gmail.com}}

\maketitle

\begin{abstract}
We study partial quantifier elimination (PQE) for propositional CNF
formulas with existential quantifiers. PQE is a generalization of
quantifier elimination where one can limit the set of clauses taken
out of the scope of quantifiers to a small subset of clauses. The
appeal of PQE is that many verification problems (\eg equivalence
checking and model checking) can be solved in terms of PQE and the
latter can be dramatically simpler than full quantifier
elimination. We show that PQE can be used for property generation that
can be viewed as a generalization of testing. The objective here is to
produce an \ti{unwanted} property of a design implementation thus
exposing a bug. We introduce two PQE solvers called \Eg and \egp. \Eg
is a very simple SAT-based algorithm. \egp is more sophisticated and
robust than \Eg.  We use these PQE solvers to find an unwanted
property (namely, an unwanted invariant) of a buggy FIFO buffer. We
also apply them to invariant generation for sequential circuits from a
HWMCC benchmark set. Finally, we use these solvers to generate
properties of a combinational circuit that mimic symbolic simulation.
\end{abstract}

\vspace{-10pt}
\section{Introduction}
\label{sec:intro}
In this paper, we consider the following problem. Let $F(X,Y)$ be a
propositional formula in conjunctive normal form
(CNF)\footnote{Every formula is a propositional CNF formula unless otherwise
stated. Given a CNF formula $F$ represented as the conjunction of
clauses $C_1 \wedge \dots \wedge C_k$, we will also consider $F$ as
the \ti{set} of clauses \s{C_1,\dots,C_k}.
} where $X,Y$ are sets of
variables. Let $G$ be a subset of clauses of $F$.  Given a
formula \prob{X}{F}, find a quantifier-free formula $H(Y)$ such that
$\prob{X}{F}\equiv H\wedge\prob{X}{F \setminus G}$.  In contrast
to \ti{full} quantifier elimination (QE), only the clauses of $G$ are
taken out of the scope of quantifiers here. So, we call this problem 
\ti{partial} QE (PQE)~\cite{hvc-14}.  (In this paper, we
consider PQE only for formulas with \ti{existential} quantifiers.) We
will refer to $H$ as a \ti{solution} to PQE.  Like SAT, PQE is a way
to cope with the complexity of QE. But in contrast to SAT that is a
\ti{special} case of QE (where all variables are quantified), PQE
\ti{generalizes} QE. The latter is just a special case of PQE where $G = F$
and the entire formula is
unquantified. Interpolation~\cite{craig,ken03} is a special case of
PQE as well~\cite{tech_rep_pc_lor} (see also
Appendix~\ref{app:interp}).

The appeal of PQE is threefold. First, it can be much more efficient
than QE if $G$ is a \ti{small} subset of $F$. Second, many
verification problems like SAT, equivalence checking, model checking
can be solved in terms of
PQE~\cite{hvc-14,south_korea,fmcad16,mc_no_inv2}. So, PQE can be used
to design new efficient methods for solving known problems. Third, one
can apply PQE to solving \ti{new} problems like property generation
considered in this paper. In practice, to perform PQE, it suffices to
have an algorithm that takes a single clause out of the scope of
quantifiers. Namely, given a formula \prob{X}{F(X,Y)} and a clause
$C \in F$, this algorithm finds a formula $H(Y)$ such that
$\prob{X}{F} \equiv H \wedge \prob{X}{F \setminus \s{C}}$. To take out
$k$ clauses, one can apply this algorithm $k$ times.  Since
$H \wedge \prob{X}{F} \equiv H\wedge \prob{X}{ F \setminus \s{C}}$,
solving the PQE above reduces to finding $H(Y)$ that makes
$C$ \ti{redundant} in $H \wedge \prob{X}{F}$. So, the PQE algorithms
we present here employ \ti{redundancy based reasoning}.  Namely, we
describe two PQE algorithms called \Eg and \egp where ``\ti{EG}''
stands for ``Enumerate and Generalize''. \Eg is a very simple
SAT-based algorithm that can sometimes solve very large problems. \egp
is a modification of \Eg that makes the algorithm more powerful and
robust.

In~\cite{fmcad16}, we showed the viability of an equivalence checker
based on PQE.  In particular, we presented instances for which this
equivalence checker outperformed ABC~\cite{abc}, a high quality
tool. In this paper, we describe and check experimentally one more
important application of PQE called property generation.  Our
motivation here is as follows. Suppose a design
implementation \ti{Imp} meets the set of specification properties
$P_1,\dots,P_m$.  Typically, this set is incomplete. So, \ti{Imp} can
still be buggy even if every $P_i,i=1,\dots,m$ holds.  Let
$P^*_{m+1},\dots,P^*_n$ be \ti{desired} properties adding which makes
the specification complete. If \ti{Imp} meets the properties
$P_1,\dots,P_m$ but is still buggy, a missed property $P^*_i$ above
fails. That is, \ti{Imp} has the \ti{unwanted} property
$\overline{P^*_i}$. So, one can detect bugs by generating unspecified
properties of \ti{Imp} and checking if there is an unwanted one.

Currently, identification of unwanted properties is mostly done by
massive testing. (As we show later, the input/output behavior
specified by a single test can be cast as a simple property
of \ti{Imp}.) Another technique employed in practice is \ti{guessing}
a list of unwanted properties that may hold and formally checking if
this is the case.  The problem with these techniques is that they can
miss an unwanted property. In this paper, we describe property
generation by PQE.  The benefit of PQE is that it can produce much
more complex properties than those corresponding to single tests. So,
using PQE one can detect bugs that testing overlooks or cannot find in
principle. Importantly, PQE generates properties covering different
parts of \ti{Imp}. This makes the search for unwanted properties more
systematic and facilitates discovering bugs that can be missed if one
simply guesses unwanted properties that may hold.

In this paper, we experimentally study generation of invariants of a
sequential circuit $N$. An invariant of $N$ is unwanted if a state
that is supposed to be reachable in $N$ falsifies this invariant and
hence is unreachable. Note that finding a formal proof that $N$ has no
unwanted invariants is impractical. (It is hard to efficiently prove a
large set of states reachable because different states are reached by
different execution traces.)  So developing practical methods for
finding unwanted invariants if very important. We also study
generation of properties mimicking symbolic simulation for a
combinational circuit obtained by unrolling a sequential circuit. An
unwanted property here exposes a wrong execution trace.

The main body of this paper is structured as follows. (Some additional
information can be found in appendices.) In Section~\ref{sec:basic},
we give basic definitions. Section~\ref{sec:prop_gen} presents
property generation for a combinational circuit. In
Section~\ref{sec:inv_gen}, we describe invariant generation for a
sequential circuit. Sections~\ref{sec:eg_pqe} and~\ref{sec:eg_pqe+}
present \Eg and \egp respectively.  In Section~\ref{sec:fifo_exper},
invariant generation is used to find a bug in a FIFO buffer.
Experiments with invariant generation for HWMCC benchmarks are
described in Section~\ref{sec:inv_gen_exper}.
Section~\ref{sec:comb_exper} presents an experiment with property
generation for combinational circuits.  In Section~\ref{sec:bg} we
give some background. Finally, in Section~\ref{sec:concl}, we make
conclusions and discuss directions for future research.

\vspace{-2pt}
\section{Basic Definitions}
\vspace{-1pt}
\label{sec:basic}

In this section, when we say ``formula'' without mentioning
quantifiers, we mean ``a quantifier-free formula''.

\begin{definition}
\label{def:cnf}
We assume that formulas have only Boolean variables.  A \tb{literal}
of a variable $v$ is either $v$ or its negation.  A \tb{clause} is a
disjunction of literals. A formula $F$ is in conjunctive normal form
(\tb{CNF}) if $F = C_1 \wedge \dots \wedge C_k$ where $C_1,\dots,C_k$
are clauses. We will also view $F$ as the \tb{set of
clauses} \s{C_1,\dots,C_k}. We assume that \tb{every formula is in
CNF}.
\end{definition}

%
%
\begin{definition}
  \label{def:vars} Let $F$ be a formula. Then \bm{\V{F}} denotes the
set of variables of $F$ and \bm{\V{\prob{X}{F}}} denotes
$\V{F}\!\setminus\!X$.
\end{definition}

%
%
\begin{definition}
Let $V$ be a set of variables. An \tb{assignment} \pnt{q} to $V$ is a
mapping $V'~\rightarrow \s{0,1}$ where $V' \subseteq V$.  We will
denote the set of variables assigned in \pnt{q}~~as \bm{\Va{q}}. We will
refer to \pnt{q} as a \tb{full assignment} to $V$ if $\Va{q}=V$. We
will denote as \bm{\pnt{q} \subseteq \pnt{r}} the fact that a) $\Va{q}
\subseteq \Va{r}$ and b) every variable of \Va{q} has the same value
in \pnt{q} and \pnt{r}.
\end{definition}

%
%
\begin{definition}
A literal, a clause, and a formula are said to be \tb{satisfied}
(respectively \tb{falsified}) by an assignment \pnt{q} if they
evaluate to 1 (respectively 0) under \pnt{q}.
\end{definition}

%
%
\begin{definition}
\label{def:cofactor}
Let $C$ be a clause. Let $H$ be a formula that may have quantifiers,
and \pnt{q} be an assignment to
\V{H}.  If $C$ is satisfied by \pnt{q}, then \bm{\cof{C}{q} \equiv
  1}. Otherwise, \bm{\cof{C}{q}} is the clause obtained from $C$ by
removing all literals falsified by \pnt{q}. Denote by \bm{\cof{H}{q}}
the formula obtained from $H$ by removing the clauses satisfied by
\pnt{q} and replacing every clause $C$ unsatisfied by \pnt{q} with
\cof{C}{q}.
\end{definition}

%
%
\begin{definition}
  \label{def:Xcls}
Given a formula \prob{X}{F(X,Y)}, a clause $C$ of $F$ is called a
\tb{quantified clause} if \V{C} $\cap~X~\neq~\emptyset$. If $\V{C}
\cap X = \emptyset$, the clause $C$ depends only on free, \ie
unquantified variables of\, $F$ and is called a \tb{free clause}.
\end{definition}

%
%
\begin{definition}
\label{def:formula-equiv}
Let $G, H$ be formulas that may have existential quantifiers. We say
that $G, H$ are \tb{equivalent}, written \bm{G \equiv H}, if $\cof{G}{q} =
\cof{H}{q}$ for all full assignments \pnt{q} to $\V{G} \cup \V{H}$.
\end{definition}

%
%
\begin{definition}
\label{def:red_cls}
Let $F(X,Y)$ be a formula and $G \subseteq F$ and $G \neq
\emptyset$. The clauses of $G$ are said to be \textbf{redundant in} \bm{\prob{X}{F}} if
$\prob{X}{F} \equiv \prob{X}{F \setminus G}$. Note that if $F
\setminus G$ implies $G$, the clauses of $G$ are redundant in
\prob{X}{F}.
\end{definition}

%
%
\begin{definition}
 \label{def:pqe_prob} Given a formula \prob{X}{F(X,Y))} and $G$ where
 $G \subseteq F$, the \tb{Partial Quantifier Elimination} (\tb{PQE})
 problem is to find $H(Y)$ such that\linebreak \Bm{\prob{X}{F}\equiv
 H\wedge\prob{X}{F \setminus G}}.  (So, PQE takes $G$ out of the scope
 of quantifiers.)  The formula $H$ is called a \tb{solution} to
 PQE. The case of PQE where $G = F$ is called \tb{Quantifier
 Elimination} (\tb{QE}).
\end{definition}

%
%
\begin{example}
\label{exmp:pqe_exmp}
Consider the formula $F = C_1 \wedge C_2 \wedge C_3 \wedge C_4$ where
$C_1=\overline{x}_3 \vee x_4$, $C_2\!=\!y_1\!\vee\!x_3$,
$C_3=y_1 \vee \overline{x}_4$, $C_4\!=\!y_2\!\vee\!x_4$. Let $Y$
denote \s{y_1,y_2} and $X$ denote \s{x_3,x_4}. Consider the PQE
problem of taking $C_1$ out of \prob{X}{F}, \ie finding $H(Y)$ such
that $\prob{X}{F} \equiv H \wedge \prob{X}{F \setminus \s{C_1}}$. As
we show later, $\prob{X}{F} \equiv
y_1 \wedge \prob{X}{F \setminus \s{C_1}}$.  That is, $H\!  =\!y_1$ is
a solution to the PQE problem above.
\end{example}

%
%
\begin{remark}
\label{rem:noise}
Let $D$ be a clause of a solution $H$ to the PQE problem of
Definition~\ref{def:pqe_prob}.  If $F \setminus G$ implies $D$, then
$H \setminus \s{D}$ is a solution to this PQE problem too.
\end{remark}

%
%
\begin{proposition}
\label{prop:sol_impl}
Let $H$ be a solution to the PQE problem of
Definition~\ref{def:pqe_prob}.  That is, $\prob{X}{F}\equiv
H\wedge\prob{X}{F \setminus G}$. Then $F \imp H$ (\ie $F$ implies
$H$).
\end{proposition}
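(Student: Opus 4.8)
The plan is to argue directly from the definition of equivalence (Definition~\ref{def:formula-equiv}) applied to the hypothesis $\prob{X}{F} \equiv H \wedge \prob{X}{F \setminus G}$. To show $F \imp H$, it suffices to take an arbitrary full assignment $\pnt{y}$ to $Y$ (more precisely, to $\V{H} \cup \V{\prob{X}{F}}$, the free variables in play) that satisfies $F$ under some extension to $X$, and show that $\pnt{y}$ satisfies $H$. Equivalently, and more cleanly, I would show the contrapositive-flavored statement at the level of cofactors: for every full assignment $\pnt{y}$ to $Y$, if $\cof{H}{y} = 0$ then $\cof{F}{y}$ is unsatisfiable, i.e. $(\prob{X}{F})_{\pnt{y}} \equiv 0$.

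The key step is the following chain. Fix a full assignment $\pnt{y}$ to $Y$ with $\cof{H}{y} = 0$. Taking the cofactor of both sides of the equivalence $\prob{X}{F} \equiv H \wedge \prob{X}{F \setminus G}$ with respect to $\pnt{y}$, and using that cofactoring distributes over conjunction and commutes with the existential quantifier over $X$ (since $\pnt{y}$ assigns only $Y$-variables), we get $(\prob{X}{F})_{\pnt{y}} \equiv \cof{H}{y} \wedge (\prob{X}{F \setminus G})_{\pnt{y}} = 0 \wedge (\prob{X}{F \setminus G})_{\pnt{y}} \equiv 0$. Hence $\cof{F}{y}$ is unsatisfiable, so no extension of $\pnt{y}$ to $X$ satisfies $F$; in particular $\pnt{y}$ falsifies $F$ in the sense that every full assignment extending it falsifies $F$. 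Since this holds for every $\pnt{y}$ falsifying $H$, we conclude $F \imp H$: any full assignment to $\V{F} \cup \V{H}$ satisfying $F$ must satisfy $H$.

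The only mild subtlety — and the step I would be most careful about — is the bookkeeping around variable sets: $H$ is a formula over $Y$ while $F$ is over $X \cup Y$, so ``$F \imp H$'' should be read with respect to full assignments to $\V{F} \cup \V{H} \subseteq X \cup Y$, and one must check that restricting such an assignment to $Y$ is exactly the cofactoring operation invoked above. This is routine given Definitions~\ref{def:cofactor} and~\ref{def:formula-equiv}, but it is worth stating explicitly so that the phrase ``$\pnt{y}$ assigns only $Y$-variables, so $(\prob{X}{(\cdot)})_{\pnt{y}} = \prob{X}{(\cdot)_{\pnt{y}}}$'' is justified rather than asserted. No deeper obstacle is expected; the proposition is essentially an unpacking of the definition of a PQE solution together with the elementary fact that $0 \wedge \varphi \equiv 0$.
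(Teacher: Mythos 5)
Your proof is correct, but it follows a somewhat different route from the paper's. The paper argues equationally: conjoining both sides of $\prob{X}{F}\equiv H\wedge\prob{X}{F \setminus G}$ with $H$ and using idempotence gives $H \wedge \prob{X}{F} \equiv H\wedge\prob{X}{F \setminus G} \equiv \prob{X}{F}$, hence $\prob{X}{F} \imp H$, and therefore $F \imp H$; no cofactoring appears at all. You instead evaluate the defining equivalence pointwise: at any full assignment \pnt{y} to the free variables with $\cof{H}{y}=0$, the right-hand side collapses to $0$, so \cof{F}{y} is unsatisfiable, and hence every assignment falsifying $H$ falsifies $F$ — which is just the contrapositive of the same implication $\prob{X}{F} \imp H$ that the paper derives algebraically. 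Both proofs are sound and of comparable depth; the paper's is shorter and stays at the level of formula manipulation, while yours is more explicitly semantic and has the merit of spelling out the variable bookkeeping (that \pnt{y} assigns only unquantified variables, so cofactoring commutes with $\exists X$ and restriction to $Y$ is exactly the cofactor operation), which the paper's one-liner leaves implicit. The one step you flag as needing justification is indeed routine under Definitions~\ref{def:cofactor} and~\ref{def:formula-equiv}, so there is no gap.
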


The proofs of propositions are given in Appendix~\ref{app:proofs}.

%
%
\begin{definition}
\label{def:resol}
Let clauses $C'$,$C''$ have opposite literals of exactly one variable
$w\!\in\!\V{C'}\!\cap\!\V{C''}$.  Then $C'$,$C''$ are called
\tb{resolvable} on~$w$.  The clause $C$ having all literals of
$C',C''$ but those of $w$ is called the \tb{resolvent} of
$C'$,$C''$. The clause $C$ is said to be obtained by \tb{resolution}
on $w$.
\end{definition}

%
\begin{definition}
  \label{def:blk_cls} Let $C$ be a clause of a formula $G$ and
$w \in \V{C}$. The clause $C$ is said to
be \tb{blocked}~\cite{blocked_clause} in $G$ with respect to the
variable $w$ if no clause of $G$ is resolvable with $C$ on $w$.
\end{definition}
%
%
\begin{proposition}
\label{prop:blk_cls}
Let a clause $C$ be blocked in a formula $F(X,Y)$ with respect to a
variable $x \in X$.  Then $C$ is redundant in \prob{X}{F},
\ie \prob{X}{F \setminus \s{C}} $\equiv$ \prob{X}{F}.
\end{proposition}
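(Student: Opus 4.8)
Let a clause $C$ be blocked in a formula $F(X,Y)$ with respect to a variable $x \in X$. Then $C$ is redundant in $\prob{X}{F}$, i.e. $\prob{X}{F \setminus \s{C}} \equiv \prob{X}{F}$.

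**Proof plan.**

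The plan is to show the two quantified formulas agree on every full assignment to their free variables, i.e. on every assignment $\pnt{y}$ to $Y$ (note $\V{\prob{X}{F}} \subseteq Y$, and similarly for $F \setminus \s{C}$). Fixing such a $\pnt{y}$, it suffices to prove that $\cof{F}{y}$ is satisfiable if and only if $\cof{(F \setminus \s{C})}{y}$ is satisfiable. One direction is trivial: any assignment satisfying $\cof{F}{y}$ satisfies the subformula $\cof{(F\setminus\s{C})}{y}$. So the real work is the converse.

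For the converse, suppose $\pnt{x}$ is a full assignment to $X$ satisfying $\cof{(F\setminus\s{C})}{y}$. If $\pnt{x}$ (together with $\pnt{y}$) already satisfies $C$, we are done. Otherwise $C$ is falsified, so in particular the literal of $x$ in $C$ is falsified by $\pnt{x}$. The key step is to flip the value of $x$: let $\pnt{x}'$ agree with $\pnt{x}$ on all variables of $X$ except $x$, where it takes the opposite value. Then $\pnt{x}'$ satisfies $C$ (its literal of $x$ is now satisfied). It remains to argue $\pnt{x}'$ still satisfies every clause of $F \setminus \s{C}$. A clause not containing variable $x$ is unaffected by the flip. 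A clause $D \in F$ containing the literal of $x$ of the \emph{same} sign as in $C$: since $\pnt{x}$ falsified that literal in $C$, it falsified it in $D$ too, so $\pnt{x}$ satisfied $D$ via some other literal, and that literal is untouched by flipping only $x$ — so $\pnt{x}'$ still satisfies $D$. A clause $D \in F$ containing the literal of $x$ of the \emph{opposite} sign to $C$: such a $D$ would be resolvable with $C$ on $x$, contradicting that $C$ is blocked with respect to $x$ — so no such clause exists. Hence $\pnt{x}'$ satisfies all of $F \setminus \s{C}$ and also $C$, i.e. it satisfies $\cof{F}{y}$.

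This establishes $\cof{F}{y}$ is satisfiable iff $\cof{(F\setminus\s{C})}{y}$ is, for every $\pnt{y}$, which gives $\prob{X}{F} \equiv \prob{X}{F\setminus\s{C}}$ by Definition~\ref{def:formula-equiv} (each cofactor by a full $\pnt{y}$ reduces to a constant $0/1$ determined exactly by this satisfiability). I expect the only subtle point to be the careful case analysis on clauses of $F \setminus \s{C}$ that mention $x$ — in particular correctly invoking the blocked-clause hypothesis to rule out the resolvable case, and observing that for the remaining (same-sign) case the flip of $x$ alone cannot destroy satisfaction because the previously-satisfied literal is a literal of a different variable. Everything else is routine unwinding of the cofactor and equivalence definitions.
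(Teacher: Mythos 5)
Your overall strategy---reduce to each subspace \pnt{y} and show \cof{F}{y} and \cof{(F \setminus \s{C})}{y} are equisatisfiable by flipping the value of $x$ in a satisfying assignment---is sound, and it amounts to a self-contained version of what the paper delegates to its cited blocked-clause result (the paper quotes that removing a clause blocked in a formula over only quantified variables preserves the value of the quantified formula, and then lifts this to every subspace \pnt{y}, noting that \cof{C}{y} is either satisfied by \pnt{y} or blocked in \cof{F}{y} with respect to $x$). However, your handling of the decisive case is wrong. Under Definition~\ref{def:resol}, two clauses are resolvable on $x$ only if $x$ is the \emph{only} variable on which they have opposite literals. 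Consequently, ``$C$ is blocked with respect to $x$'' does \emph{not} imply that no clause of $F \setminus \s{C}$ contains the literal of $x$ opposite to the one in $C$; it only implies that any such clause $D$ also clashes with $C$ on some further variable $v \neq x$ (so that the resolvent would be a tautology). For instance, with $C = x \vee v$ and $D = \overline{x} \vee \overline{v}$, and no other clause containing $\overline{x}$, the clause $C$ is blocked with respect to $x$ even though $D$ contains $\overline{x}$. Such clauses $D$ are precisely the ones endangered by your flip: their literal of $x$ was satisfied before the flip and is falsified after it. So your assertion ``no such clause exists'' disposes of exactly the one nontrivial case without proof.

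The gap is fixable by the standard argument: since $(\pnt{x},\pnt{y})$ falsifies every literal of $C$, it falsifies the literal $l(v)$ of $C$ on which $C$ and $D$ clash, hence it satisfies $\overline{l(v)} \in D$; because $v \neq x$, this literal is unaffected by flipping $x$, so $D$ remains satisfied by $(\pnt{x}',\pnt{y})$. With that replacement your proof goes through and is a legitimate, more elementary alternative to the paper's proof, which relies on the external blocked-clause theorem rather than reproving the flipping argument.
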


\section{Property Generation By PQE}
\label{sec:prop_gen}
Many known problems can be formulated in terms of PQE, thus
facilitating the design of new efficient algorithms.  In
Appendix~\ref{app:using_pqe}, we recall some results on solving SAT,
equivalence checking and model checking by PQE presented in
~\cite{hvc-14,south_korea,fmcad16,mc_no_inv2}.  In this section, we
describe application of PQE to \ti{property generation} for a
combinational circuit.  The objective of property generation is to
expose a bug via producing an \ti{unwanted} property.

Let $M(X,V,W)$ be a combinational circuit where $X,V,W$ specify the
sets of the internal, input, and output variables of $M$
respectively. Let $F(X,V,W)$ denote a formula specifying $M$. As
usual, this formula is obtained by Tseitin's
transformations~\cite{tseitin}. Namely, $F$ equals $F_{G_1} \wedge
\dots \wedge F_{G_k}$ where $G_1,\dots,G_k$ are the gates of $M$ and
$F_{G_i}$ specifies the functionality of gate $G_i$.
%
%
\begin{example}
\label{exmp:gate_cnf}
Let $G$ be a 2-input AND gate defined as $x_3 = x_1 \wedge x_2$ where
$x_3$ denotes the output value and $x_1,x_2$ denote the input values
of $G$. Then $G$ is specified by the formula $F_G\!=\!(\overline{x}_1
\vee \overline{x}_2\vee x_3) \wedge (x_1 \vee \overline{x}_3) \wedge
(x_2 \vee \overline{x}_3)$. Every clause of $F_G$ is falsified by an
inconsistent assignment (where the output value of $G$ is not implied
by its input values). For instance, $x_1\!\vee \overline{x}_3$ is
falsified by the inconsistent assignment $x_1\!=\!0, x_3\!=\!1$. So,
every assignment \ti{satisfying} $F_G$ corresponds to a
\ti{consistent} assignment to $G$ and vice versa. Similarly, every
assignment satisfying the formula $F$ above is a consistent assignment
to the gates of $M$ and vice versa.
\end{example}

\vspace{-5pt}
\subsection{High-level view of property generation by PQE}
One generates properties by PQE until an unwanted property exposing a
bug is produced.  (Like in testing one runs tests until a bug-exposing
test is encountered.)  The benefit of property generation by PQE is
fourfold.  First, by property generation one can identify bugs that
are hard or simply impossible to find by testing. Second, using PQE
makes property generation efficient.  Third, by taking out different
clauses one can generate properties covering different parts of the
design. This increases the probability of discovering a bug.  Fourth,
every property generated by PQE specifies a large set of high-quality
tests.

In this paper (Sections~\ref{sec:fifo_exper},~\ref{sec:comb_exper}),
we consider cases where identifying an unwanted property is
easy. However, in general, such identification is not trivial.  A
detailed discussion of this topic is beyond the scope of this paper.
(Nevertheless, in Appendix~\ref{app:unw_props}, we give the main idea
of the procedure for deciding if a property is unwanted.)

%
%
\subsection{Property generation as generalization of testing}
\label{ssec:pg_gnrl_tsts}
The behavior of $M$ corresponding to a single test can be cast as a
property.  Let $w_i \in W$ be an output variable of $M$ and \pnt{v} be
a test, \ie a full assignment to the input variables $V$ of $M$. Let
\Cof{B}{v} denote the longest clause falsified by \pnt{v}, \ie
$\V{\Cof{B}{v}} = V$. Let $l(w_i)$ be the literal satisfied by the
value of $w_i$ produced by $M$ under input \pnt{v}. Then the clause
$\Cof{B}{v} \vee l(w_i)$ is satisfied by every assignment satisfying
$F$, \ie $\Cof{B}{v} \vee l(w_i)$ is a property of $M$.  We will refer
to it as a \tb{single-test property} (since it describes the behavior
of $M$ for a single test). If the input \pnt{v} is supposed to produce
the opposite value of $w_i$ (\ie the one \ti{falsifying} $l(w_i)$),
then \pnt{v} exposes a bug in $M$.  In this case, the single-test
property above is an \tb{unwanted} property of $M$ exposing the same
bug as the test \pnt{v}.

A single-test property can be viewed as a weakest property of $M$ as
opposed to the strongest property specified by \prob{X}{F}.  The
latter is the truth table of $M$ that can be computed explicitly by
performing QE on \prob{X}{F}.  One can use PQE to generate properties
of $M$ that, in terms of strength, range from the weakest ones to the
strongest property inclusively. (By combining clause splitting with
PQE one can generate single-test properties, see the next subsection.)
Consider the PQE problem of taking a clause $C$ out of \prob{X}{F}.
Let $H(V,W) $ be a solution to this problem, \ie $\prob{X}{F} \equiv H
\wedge \prob{X}{F \setminus \s{C}}$.  Since $H$ is implied by $F$, it
can be viewed as a \tb{property} of $M$. If $H$ is an \tb{unwanted}
property, $M$ has a bug. (Here we consider the case where a property
of $M$ is obtained by taking a clause out of formula \prob{X}{F} where
only the \ti{internal} variables of $M$ are quantified. Later we
consider cases where some external variables of $M$ are quantified
too.)

We will assume that the property $H$ generated by PQE has no redundant
clauses (see Remark~\ref{rem:noise}). That is, if $D \in H$, then $F
\setminus \s{C} \not\imp D$. Then one can view $H$ as a property that
holds due to the presence of the clause $C$ in $F$.

%
%
\subsection{Computing properties efficiently}
\label{ssec:eff_comp}
If a property $H$ is obtained by taking only one clause out of
\prob{X}{F}, its computation is much easier than performing QE on
\prob{X}{F}. If computing $H$ still remains too time-consuming, one
can use the two methods below that achieve better performance at the
expense of generating weaker properties. The first method applies when
a PQE solver forms a solution \ti{incrementally}, clause by clause
(like the algorithms described in Sections~\ref{sec:eg_pqe}
and~\ref{sec:eg_pqe+}). Then one can simply stop computing $H$ as soon
as the number of clauses in $H$ exceeds a threshold.  Such a formula
$H$ is still implied by $F$ and hence specifies a property of $M$.

The second method employs \ti{clause splitting}.  Here we consider
clause splitting on input variables $v_1,\dots,v_p$, \ie those of $V$
(but one can split a clause on any subset of variables from
\V{F}). Let $F'$ denote the formula $F$ where a clause $C$ is replaced
with $p+1$ clauses: $C_1 = C \vee \overline{l(v_1)}$,\dots, $C_p = C
\vee \overline{l(v_p)}$, $C_{p+1} = C \vee l(v_1) \vee \dots \vee
l(v_p)$, where $l(v_i)$ is a literal of $v_i$. The idea is to obtain a
property $H$ by taking the clause $C_{p+1}$ out of \prob{X}{F'} rather
than $C$ out of \prob{X}{F}. The former PQE problem is simpler than
the latter since it produces a weaker property $H$. One can show that
if $\s{v_1,\dots,v_p}\!  =\!V$, then a) the complexity of PQE reduces
to \tb{linear}; b) taking out $C_{p+1}$ actually produces a
\tb{single-test property}. The latter specifies the input/output
behavior of $M$ for the test \pnt{v} falsifying the literals
$l(v_1),\dots, l(v_p)$. (See Appendix~\ref{app:cls_split} for more
details.)

%
%
\subsection{Using design coverage for generation of unwanted properties}
\label{ssec:des_cov}
Arguably, testing is so effective in practice because one verifies a
\ti{particular design}. Namely, one probes different parts of this
design using some coverage metric rather than sampling the truth table
(which would mean verifying \ti{every possible design}). The same idea
works for property generation by PQE for the following two
reasons. First, by taking out a clause, PQE generates a property
inherent to the \ti{specific} circuit $M$. (If one replaces $M$ with
an equivalent but structurally different circuit, PQE will generate
different properties.)  Second, by taking out different clauses of $F$
one generates properties corresponding to different parts of $M$ thus
``covering'' the design. This increases the chance to take out a
clause corresponding to the buggy part of $M$ and generate an unwanted
property.

\vspace{-3pt}
\subsection{High-quality tests specified by a property generated by PQE}
\label{ssec:tests_props}
In this subsection, we show that a property $H$ generated by PQE, in
general, specifies a large set of high-quality tests. Let $H(V,W)$ be
obtained by taking $C$ out of \prob{X}{F(X,V,W)}. Let $Q(V,W)$ be a
clause of $H$. As mentioned above, we assume that $F \setminus \s{C}
\not\imp Q$. Then there is an \smallskip assignment
(\pnt{x},\pnt{v},\pnt{w}) satisfying formula $(F \setminus \s{C})
\wedge \overline{Q}$ where \pnt{x},\pnt{v},\pnt{w} are assignments to
$X,V,W$ respectively. (Note that by definition, (\pnt{v},\pnt{w})
falsifies $Q$.) Let $(\pnt{x}^*,\pnt{v},\pnt{w}^*)$ be the execution
trace of $M$ under the input \pnt{v}. So,
$(\pnt{x}^*,\pnt{v},\pnt{w}^*)$ satisfies $F$. Note that the output
assignments \pnt{w} and $\pnt{w}^*$ must be different because
$(\pnt{v},\pnt{w}^*)$ has to satisfy $Q$. (Otherwise,
$(\pnt{x}^*,\pnt{v},\pnt{w}^*)$ satisfies $F \wedge \overline{Q}$ and
so $F \not\imp Q$ and hence $F \not\imp H$.)  So, one can view \pnt{v}
as a test ``detecting'' disappearance of the clause $C$ from $F$.
Note that different assignments satisfying $(F \setminus \s{C}) \wedge
\overline{Q}$ correspond to different tests \pnt{v}. So, the clause
$Q$ of $H$, in general, specifies a very large number of tests. One
can show that these tests are similar to those detecting stuck-at
faults and so have very high quality (see
Appendix~\ref{app:tests_props} for more details).

\section{Invariant Generation By PQE}
\label{sec:inv_gen}
In this section, we extend property generation for combinational
circuits to sequential ones. Namely, we generate \ti{invariants}. Note
that generation of \ti{desired} auxiliary invariants is routinely used
in practice to facilitate verification of a predefined property. The
problem we consider here is different in that our goal is to produce
an \ti{unwanted} invariant. We picked generation of invariants (over
that of weaker properties just claiming that a state cannot be reached
in $k$ transitions or less) because identification of an unwanted
invariant is, arguably, easier. This simplifies bug detection by
property generation.

%
%
\subsection{Bugs making states unreachable}
\label{ssec:unr_bugs}
Let $N$ be a sequential circuit and $S$ denote the state variables of
$N$. Let $I(S)$ specify the initial state \pqnt{s}{ini} (\ie
$I(\pqnt{s}{ini})\!=\!1$). Let $T(S',V,S'')$ denote the transition
relation of $N$ where $S',S''$ are the present and next state
variables and $V$ specifies the (combinational) input variables. We
will say that a state \pnt{s} of $N$ is reachable if there is an
execution trace leading to \pnt{s}. That is, there is a sequence of
states $\pnt{s_0},\dots,\pnt{s_k}$ where $\pnt{s_0} = \pqnt{s}{ini}$,
$\pent{s}{k}\!=\!\pnt{s}$ and there exist \pent{v}{i}
$i=0,\dots,k\!-\!1$ for which $T(\pent{s}{i},\pent{v}{i},
\pent{s}{i+1}) = 1$.  Let $N$ have to satisfy a set of \tb{invariants}
$P_0(S),\dots,P_m(S)$. That is, $P_i$ holds iff $P_i(\pnt{s}) = 1$ for
every reachable state \pnt{s} of $N$. We will denote the \tb{aggregate
  invariant} $P_0 \wedge \dots \wedge P_m$ as \bm{\Sub{P}{agg}}. We
will call \pnt{s} a \tb{bad state} of $N$ if
$\Sub{P}{agg}(\pnt{s})=0$. If \Sub{P}{agg} holds, no bad state is
reachable. We will call \pnt{s} a \tb{good state} of $N$ if
$\Sub{P}{agg}(\pnt{s}) = 1$.

Typically, the set of invariants $P_0,\dots,P_m$ is incomplete in the
sense that it does not specify all states that must be
\ti{unreachable}. So, a good state can well be unreachable. We will
call a good state \tb{operative} (or \tb{op-state} for short) if it is
supposed to be used by $N$ and so should be \ti{reachable}. We
introduce the term \ti{an operative state} just to factor out
``useless'' good states. We will say that $N$ has an \tb{op-state
  reachability bug} if an op-state is unreachable in $N$. In
Section~\ref{sec:fifo_exper}, we consider such a bug in a FIFO
buffer. The fact that \Sub{P}{agg} holds says \ti{nothing} about
reachability of op-states. Consider, for instance, a trivial circuit
\Sub{N}{triv} that simply stays in the initial state \pqnt{s}{ini} and
$\Sub{P}{agg}(\pqnt{s}{ini})=1$. Then \Sub{P}{agg} holds for
\Sub{N}{triv} but the latter has op-state reachability bugs (assuming
that the correct circuit must reach states other than \pqnt{s}{ini}).

Let $R_{\vv{s}}(S)$ be the predicate satisfied only by a state
\pnt{s}. In terms of CTL, identifying an op-state reachability bug
means finding \pnt{s} for which the property $EF.R_{\vv{s}}$ must hold
but it does not.  The reason for assuming \pnt{s} to be \ti{unknown}
is that the set of op-states is typically too large to \ti{explicitly}
specify every property $ET.R_{\vv{s}}$ to hold. This makes finding
op-state reachability bugs very hard. The problem is exacerbated by
the fact that reachability of different states is established by
\ti{different traces}. So, in general, one cannot efficiently prove
many properties $EF.R_{\vv{s}}$ (for different states) \ti{at once}.

%
%
\subsection{Proving op-state unreachability  by invariant generation}
\label{ssec:check_unreach}
In practice, there are two methods to check reachability of operative
states for large circuits.  The first method is testing. Of course,
testing cannot prove a state unreachable, however, the examination of
execution traces may point to a potential problem. (For instance,
after examining execution traces of the circuit \Sub{N}{triv} above
one realizes that many operative states look unreachable.) The other
method is to check \tb{unwanted invariants}, \ie those that are
supposed to fail. If an unwanted invariant holds for a circuit, the
latter has an op-state reachability bug. For instance, one can check
if a state variable $s_i \in S$ of a circuit never changes its initial
value. To break this unwanted invariant, one needs to find an
operative state where the initial value of $s_i$ is flipped. (For the
circuit \Sub{N}{triv} above this unwanted invariant holds for every
state variable.) The potential unwanted invariants are formed
manually, \ie simply \ti{guessed}.

The two methods above can easily overlook an op-state reachability
bug.  Testing cannot prove that an op-state is unreachable. To
correctly guess an unwanted invariant that holds, one essentially has
to know the underlying bug. Below, we describe a method for invariant
generation by PQE that is based on property generation for
combinational circuits.  The appeal of this method is twofold.  First,
PQE generates invariants ``inherent'' to the implementation at hand,
which drastically reduces the set of invariants to explore. Second,
PQE is able to generate invariants related to different parts of the
circuit (including the buggy one).  This increases the probability of
generating an unwanted invariant. We substantiate this intuition in
Section~\ref{sec:fifo_exper}.

Let formula \bm{F_k} specify the combinational circuit obtained by
unfolding a sequential circuit $N$ for $k$ time frames and adding the
initial state constraint $I(S_0)$. That is, $F_k = I(S_0) \wedge
T(S_0,V_0,S_1) \wedge \dots \wedge T(S_{k-1},V_{k-1},S_k)$ where $S_j,
V_j$ denote the state and input variables of $j$-th time frame
respectively. Let $H(S_k)$ be a solution to the PQE problem of taking
a clause $C$ out of \prob{\Abs{k}}{F_k} where $\Abs{k} = S_0 \cup V_0
\cup \dots \cup S_{k-1} \cup V_{k-1}$.  That is, \prob{\Abs{k}}{F_k}
$\equiv H \wedge$ \prob{\Abs{k}}{F_k \setminus \s{C}}. Note that in
contrast to Section~\ref{sec:prop_gen}, here some external variables
of the combinational circuit (namely, the input variables
$V_0,\dots,V_{k-1}$) are quantified too. So, $H$ depends only on state
variables of the last time frame. $H$ can be viewed as a \tb{local
  invariant} asserting that no state falsifying $H$ can be reached in
$k$ transitions.

One can use $H$ to find global invariants (holding for \ti{every} time
frame) as follows.  Even if $H$ is only a local invariant, a clause
$Q$ of $H$ can be a \ti{global} invariant.  The experiments of
Section~\ref{sec:inv_gen_exper} show that, in general, this is true
for many clauses of $H$. (To find out if $Q$ is a global invariant,
one can simply run a model checker to see if the property $Q$ holds.)
Note that by taking out different clauses of $F_k$ one can produce
global single-clause invariants $Q$ relating to different parts of
$N$. From now on, when we say ``an invariant'' without a qualifier we
mean a \tb{global invariant}.

\section{Introducing \Eg}
\label{sec:eg_pqe}

In this section, we describe a simple SAT-based algorithm for
performing PQE called \Eg. Here \ti{'EG'} stands for 'Enumerate and
Generalize'.  \Eg accepts a formula \prob{X}{F(X,Y)} and a clause $C
\in F$. It outputs a formula $H(Y)$ such that $\prob{X}{\Sub{F}{ini}}
\equiv H \wedge \prob{X}{\Sub{F}{ini} \setminus \s{C}}$ where
\Sub{F}{ini} is the initial formula $F$. (This point needs
clarification because \Eg changes $F$ by adding clauses.)

%
%
\subsection{An example}
\label{ssec:exmp}
Before describing the pseudocode of \Eg, we explain how it solves the
PQE problem of Example~\ref{exmp:pqe_exmp}.  That is, we consider
taking clause $C_1$ out of \prob{X}{F(X,Y)} where $F = C_1 \wedge
\dots \wedge C_4$, $C_1=\overline{x}_3 \vee x_4$,
$C_2\!=\!y_1\!\vee\!x_3$, $C_3=y_1 \vee \overline{x}_4$,
$C_4\!=\!y_2\!\vee\!x_4$ and $Y=\s{y_1,y_2}$ and $X=\s{x_3,x_4}$.

\Eg iteratively generates a full assignment \pnt{y} to $Y$ and checks
if \cof{(C_1)}{y} is redundant in \prob{X}{\cof{F}{y}} (\ie if $C_1$
is redundant in \prob{X}{F} in subspace \pnt{y}). Note that if \cof{(F
  \setminus \s{C_1})}{y} \ti{implies} \cof{(C_1)}{y}, then
\cof{(C_1)}{y} is trivially redundant in \prob{X}{\cof{F}{y}}. To
avoid such subspaces, \Eg generates \pnt{y} by searching for an
assignment (\pnt{y},\pnt{x}) satisfying the formula $(F \setminus
\s{C_1}) \wedge \overline{C}_1$. (Here \pnt{y} and \pnt{x} are full
assignments to $Y$ and $X$ respectively.) If such (\pnt{y},\pnt{x})
exists, it satisfies $F \setminus \s{C_1}$ and falsifies $C_1$ thus
proving that \cof{(F \setminus \s{C_1})}{y} \ti{does not} imply
\cof{(C_1)}{y}.

Assume that \Eg found an assignment\smallskip
$(y_1\!=\!0,y_2\!=\!1,x_3\!=\!1,x_4\!=\!0)$ satisfying $(F \setminus
\s{C_1}) \wedge \overline{C}_1$. So \pnt{y} =
$(y_1\!=\!0,y_2\!=\!1)$. Then \Eg checks if \cof{F}{y} is satisfiable.
\cof{F}{y} = $(\overline{x}_3 \vee x_4) \wedge x_3 \wedge
\overline{x}_4$ and so it is \ti{unsatisfiable}.  This means that
\cof{(C_1)}{y} \ti{is not} redundant in \prob{X}{\cof{F}{y}}. (Indeed,
\cof{(F \setminus \s{C_1})}{y} is satisfiable. So, removing $C_1$
makes $F$ satisfiable in subspace \pnt{y}.) \Eg \ti{makes}
\cof{(C_1)}{y} redundant in \prob{X}{\cof{F}{y}} by \tb{adding} to $F$
a clause $B$ falsified by \pnt{y}. The clause $B$ equals $y_1$ and is
obtained by identifying the assignments to individual variables of $Y$
that made \cof{F}{y} unsatisfiable. (In our case, this is the
assignment $y_1 = 0$.)  Note that derivation of clause $y_1$
\ti{generalizes} the proof of unsatisfiability of $F$ in subspace
$(y_1\!=\!0,y_2\!=\!1)$ so that this proof holds for subspace
$(y_1\!=\!0,y_2\!=\!0)$ too.\smallskip

Now \Eg looks for a new assignment satisfying $(F \setminus \s{C_1})
\wedge \overline{C}_1$. Let the assignment $(y_1
=1,y_2=1,x_3=1,x_4=0)$ be found. So, \pnt{y} =
$(y_1\!=\!1,y_2\!=\!1)$. Since $(y_1\!=\!1,y_2\!=\!1,x_3=0)$ satisfies
$F$, the formula \cof{F}{y} is satisfiable. So, \cof{(C_1)}{y} is
\ti{already redundant} in \prob{X}{\cof{F}{y}}. To avoid re-visiting
the subspace \pnt{y}, \Eg generates the \tb{plugging} clause $D =
\overline{y}_1 \vee \overline{y}_2$ falsified by \pnt{y}.

\Eg fails to generate a new assignment \pnt{y} because the
formula\linebreak \mbox{$D \wedge (F \setminus \s{C_1}) \wedge
  \overline{C}_1$} is unsatisfiable. Indeed, every full assignment
\pnt{y} we have examined so far falsifies either the clause $y_1$
added to $F$ or the plugging clause $D$. The only assignment \Eg has
not explored \smallskip yet is $\pnt{y}\!=\!(y_1\!=\!1, y_2\!=\!
0)$. Since \mbox{$\cof{(F \setminus \s{C_1})}{y} = x_4$} and
\cof{(C_1)}{y} = $\overline{x}_3 \vee x_4$, the formula $(F \setminus
\s{C_1}) \wedge \overline{C}_1$ is unsatisfiable in subspace
\pnt{y}. In other words,\smallskip \cof{(C_1)}{y} is implied by
\cof{(F \setminus \s{C_1})}{y} and hence is redundant. Thus, $C_1$ is
redundant in \prob{X}{\Sub{F}{ini} \wedge y_1} for every assignment to
$Y$ where \Sub{F}{ini} is the initial formula $F$. That is,
\prob{X}{\Sub{F}{ini}} $\equiv y_1 \wedge$ \prob{X}{\Sub{F}{ini}
  \setminus \s{C_1}} and so the clause $y_1$ is a solution $H$ to
our PQE problem.

%
%
\subsection{Description of \Eg}
%
%
\setlength{\intextsep}{4pt}
\setlength{\textfloatsep}{4pt}
\begin{wrapfigure}{l}{2in}
\centering
\small
\vspace{-5pt}
\parbox{0cm}{\begin{tabbing}
aaa\=b\=cc\= dd\= \kill
$\Eg(F,X,Y,C)$ \{ \\
\scriptsize{1}\> $\pl := \emptyset$; $\Sub{F}{ini}:= F$  \\
\scriptsize{2}\> while (\ti{true}) \{ \\
\scriptsize{3}\Tt $G\!:=F \setminus \s{C}$ \\
\scriptsize{4}\Tt $\pnt{y}\!:=\!\mi{Sat}_1(\pl\!\wedge G\!\wedge\!\overline{C})$  \\
\scriptsize{5}\Tt if ($\pnt{y} = \mi{nil}$) \\
\scriptsize{6}\ttt return($F \setminus \Sub{F}{ini}$) \\
\scriptsize{7}\Tt $(\pnt{x}^*,B) := \mi{Sat}_2(F,\pnt{y})$\\
\scriptsize{8}\Tt if ($B \neq \mi{nil}$) \{  \\
\scriptsize{9}\ttt  $ F:= F \cup \s{B}$ \\
\scriptsize{10}\ttt  continue \}\\
\scriptsize{11}\Tt $D \!:=\!PlugCls(\pnt{y},\!\pnt{x}^*,\!F)$ \\
\scriptsize{12}\Tt $\pl := \pl \cup \s{D}$\}\} \\
\end{tabbing}}
\vspace{-15pt}
\caption{Pseudocode of \Eg}
\label{fig:eg_pqe}
\end{wrapfigure}

The pseudo-code of \Eg is shown in Fig.~\ref{fig:eg_pqe}.  \Eg starts
with storing the initial formula $F$ and initializing formula \pl that
accumulates the plugging clauses generated by \Eg (line 1). As we
mentioned in the previous subsection, plugging clauses are used to
avoid re-visiting the subspaces where the formula $F$ is proved
satisfiable.

All the work is carried out in a while loop. First, \Eg checks if
there is a new subspace \pnt{y} where \prob{X}{\cof{(F \setminus
    \s{C})}{y}} does not imply \cof{F}{y}. This is done by searching
for an assignment (\pnt{y},\pnt{x}) satisfying $\pl \wedge (F
\setminus \s{C}) \wedge \overline{C}$ (lines 3-4). If such an
assignment does not exist, the clause $C$ is redundant in
\prob{X}{F}. (Indeed, let \pnt{y} be a full assignment to $Y$. The
formula $\pl \wedge (F \setminus \s{C}) \wedge \overline{C}$ is
unsatisfiable in subspace \pnt{y} for one of the two reasons. First,
\pnt{y} falsifies \pl. Then \cof{C}{y} is redundant because \cof{F}{y}
is satisfiable. Second, $\cof{(F \setminus \s{C})}{y} \wedge
\overline{\cof{C}{y}}$ is unsatisfiable.  In this case, \cof{(F
  \setminus \s{C})}{y} implies \cof{C}{y}.)  Then \Eg returns the set
of clauses added to the initial formula $F$ as a solution $H$ to the
PQE problem (lines 5-6).

If the satisfying assignment (\pnt{y},\pnt{x}) above exists, \Eg
checks if the formula \cof{F}{y} is satisfiable (line 7).  If not,
then the clause \cof{C}{y} \ti{is not} redundant in
\prob{X}{\cof{F}{y}} (because \cof{(F \setminus \s{C})}{y} is
satisfiable).  So, \Eg \ti{makes} \cof{C}{y} redundant by generating a
clause $B(Y)$ falsified by \pnt{y} and adding it to $F$ (line 9).
Note that adding $B$ also prevents \Eg from re-visiting the subspace
\pnt{y} again.  The clause $B$ is built by finding an
\ti{unsatisfiable} subset of \cof{F}{y} and collecting the literals of
$Y$ removed from clauses of this subset when obtaining \cof{F}{y} from
$F$.

If \cof{F}{y} is satisfiable, \Eg generates an assignment $\pnt{x}^*$
to $X$ such that $(\pnt{y},\pnt{x}^*)$ satisfies $F$ (line 7). The
satisfiability of \cof{F}{y} means that every clause of \cof{F}{y}
including \cof{C}{y} is redundant in \prob{X}{\cof{F}{y}}. At this
point, \Eg uses the longest clause $D(Y)$ falsified by \pnt{y} as a
plugging clause (line 11). The clause $D$ is added to \pl to avoid
re-visiting subspace \pnt{y}. Sometimes it is possible to remove
variables from \pnt{y} to produce a shorter assignment $\pnt{y}^*$
such that $(\pnt{y}^*,\pnt{x}^*)$ still satisfies $F$. Then one can
use a shorter plugging clause $D$ involving only the variables
assigned in $\pnt{y}^*$.
%
%
\subsection{Discussion}
\label{ssec:disc1}
\Eg is similar to the QE algorithm presented at
CAV-2002~\cite{blocking_clause}. We will refer to it as \cv. Given a
formula \prob{X}{F(X,Y)}, \cv enumerates full assignments to $Y$. In
subspace \pnt{y}, if \cof{F}{y} is unsatisfiable, \cv adds to $F$ a
clause falsified by \pnt{y}.  Otherwise, \cv generates a plugging
clause $D$. (In~\cite{blocking_clause}, $D$ is called ``a blocking
clause''. This term can be confused with the term ``blocked clause''
applied to a completely different kind of a clause. So, we use the
term ``a plugging clause'' instead.) To apply the idea of \cv to PQE,
we reformulated it in terms of redundancy based reasoning.

The main flaw of \Eg inherited from \cv is the necessity to use
plugging clauses produced from a satisfying assignment. Consider the
PQE problem of taking a clause $C$ out of \prob{X}{F(X,Y)}.  If $F$ is
proved \ti{unsatisfiable} in subspace \pnt{y}, typically, only a small
subset of clauses of \cof{F}{y} is involved in the proof. Then the
clause generated by \Eg is short and thus proves $C$ redundant in many
subspaces different from \pnt{y}. On the contrary, to prove $F$
\ti{satisfiable} in subspace \pnt{y}, every clause of $F$ must be
satisfied. So, the plugging clause built off a satisfying assignment
includes almost every variable of $Y$. Despite this flaw of \Eg, we
present it for two reasons. First, it is a very simple SAT-based
algorithm that can be easily implemented. Second, \Eg has a powerful
advantage over \cv since it solves PQE rather than QE. Namely, \Eg
does not need to examine the subspaces \pnt{y} where $C$ is implied by
$F \setminus \s{C}$. Surprisingly, for many formulas this allows \Eg
to \ti{completely avoid} examining subspaces where $F$ is satisfiable.
In this case, \Eg is very efficient and can solve very large
problems. Note that when \cv performs complete QE on \prob{X}{F}, it
\ti{cannot} avoid subspaces \pnt{y} where \cof{F}{y} is satisfiable
unless $F$ \ti{itself} is unsatisfiable (which is very rare in
practical applications).

\section{Introducing \egp}
\label{sec:eg_pqe+}
In this section, we describe \egp, an improved version of \Eg.  
%
%
\vspace{-7pt}
\subsection{Main idea}
The pseudocode of \egp is shown in Fig~\ref{fig:eg_pqe+}. It is
different from that of \Eg only in line 11 marked with an asterisk.
The motivation for this change is as follows.  Line 11 describes
proving redundancy of $C$ for the case where \cof{C}{y} is not implied
by \cof{(F \setminus \s{C})}{y} and \cof{F}{y} is satisfiable. Then
\Eg simply uses a satisfying assignment as a proof of redundancy of
$C$ in subspace \pnt{y}. This proof is unnecessarily strong because it
proves that \ti{every} clause of $F$ (including $C$) is redundant in
\prob{X}{F} in subspace \pnt{y}. Such a strong proof is hard to
generalize to other subspaces.

%
%
\setlength{\intextsep}{4pt}
\setlength{\textfloatsep}{4pt}
\begin{wrapfigure}{l}{2in}
\centering
\small
\vspace{-5pt}
\parbox{0cm}{\begin{tabbing}
aaa\=b\=cc\= dd\= \kill
$\egp(F,X,Y,C)$ \{ \\
\scriptsize{1}\> $\pl := \emptyset$; $\Sub{F}{ini}:= F$  \\
\scriptsize{2}\> while (\ti{true}) \{ \\
$........$ \\
\scriptsize{11}$^*$\Tt $D \!:=\!PrvClsRed(\pnt{y},\!F,\!C)$ \\
\scriptsize{12}\Tt $\pl := \pl \cup \s{D}$\}\} \\
\end{tabbing}}
\vspace{-10pt}
\caption{Pseudocode of \egp}
\vspace{10pt}
\label{fig:eg_pqe+}
\end{wrapfigure}

The idea of \egp is to generate a proof for a much weaker proposition
namely a proof of redundancy of $C$ (and only $C$). Intuitively, such
a proof should be easier to generalize. So, \egp calls a procedure
\ti{PrvClsRed} generating such a proof. \egp is a generic algorithm in
the sense that \ti{any} suitable procedure can be employed as
\ti{PrvClsRed}. In our current implementation, the procedure
\dpqe~\cite{hvc-14} is used as \ti{PrvClsRed}. \dpqe generates a proof
stating that $C$ is redundant in \prob{X}{F} in subspace $\pnt{y}^*
\subseteq \pnt{y}$.  Then the plugging clause $D$ falsified by
$\pnt{y}^*$ is generated. Importantly, $\pnt{y}^*$ can be much shorter
than \pnt{y}. Appendix~\ref{app:ds_pqe} gives a brief description of
\dpqe. 

%
%
\begin{example}
\label{exmp:eg_pqe+}
Consider the example solved in Subsection~\ref{ssec:exmp}.  That is,
we consider taking clause $C_1$ out of \prob{X}{F(X,Y)} where $F = C_1
\wedge \dots \wedge C_4$, $C_1=\overline{x}_3 \vee x_4$,
$C_2\!=\!y_1\!\vee\!x_3$, $C_3=y_1 \vee \overline{x}_4$,
$C_4\!=\!y_2\!\vee\!x_4$ and $Y=\s{y_1,y_2}$ and $X=\s{x_3,x_4}$.
Consider the step where \Eg proves redundancy of $C_1$ in subspace
$\pnt{y}=(y_1\!=\!1,y_2\!=\!1)$.  \Eg shows that
$(y_1\!=\!1,y_2\!=\!1,\!x_3=0)$ satisfies $F$, thus proving every
clause of $F$ (including $C_1$) redundant in \prob{X}{F} in subspace
\pnt{y}. Then \Eg generates the plugging clause $D = \overline{y}_1
\vee \overline{y}_2$ falsified by \pnt{y}.

In contrast to \Eg, \egp calls \ti{PrvClsRed} to produce a proof of
redundancy for the clause $C_1$ alone.  Note that $F$ has no clauses
resolvable with $C_1$ on $x_3$ in subspace $\pnt{y}^* = (y_1 =
1)$. (The clause $C_2$ containing $x_3$ is satisfied by $\pnt{y}^*$.)
This means that $C_1$ is blocked in subspace $\pnt{y}^*$ and hence
redundant there (see Proposition~\ref{prop:blk_cls}). Since $\pnt{y}^*
\subset \pnt{y}$, \egp produces a more general proof of redundancy
than \Eg. To avoid re-examining the subspace $\pnt{y}^*$, \egp
generates a \ti{shorter} plugging clause $D = \overline{y}_1$.
\end{example}

%
%
\subsection{Discussion}
\label{ssec:disc2}
Consider the PQE problem of taking a clause $C$ out of
\prob{X}{F(X,Y)}.  There are two features of PQE that make it easier
than QE.  The first feature mentioned earlier is that one can ignore
the subspaces \pnt{y} where $F \setminus \s{C}$ implies $C$. The
second feature is that when \cof{F}{y} is satisfiable, one only needs
to prove redundancy of the clause $C$ alone.  Among the three
algorithms we run in experiments, namely, \dpqe, \Eg, and \egp only
the latter exploits both features. (In addition to using \dpqe inside
\egp we also run it as a stand-alone PQE solver.)  \dpqe does not use
the first feature~\cite{hvc-14} and \Eg does not exploit the second
one. As we show in Sections~\ref{sec:fifo_exper}
and~\ref{sec:inv_gen_exper}, this affects the performance of \dpqe and
\Eg.

\section{Experiment With FIFO Buffers}
\label{sec:fifo_exper}

In this and the next two sections we describe some experiments
with\linebreak \dpqe, \Eg and \egp (their sources are available
at~\cite{ds_pqe},~\cite{eg_pqe} and ~\cite{eg_pqe_plus}
respectively). We used Minisat2.0~\cite{minisat} as an internal
SAT-solver.  The experiments were run on a computer with Intel Core
i5-8265U CPU of 1.6\,GHz.

%
%
\begin{wrapfigure}{l}{1.9in}
\centering
\small
\vspace{-10pt}
\parbox{0cm}{\begin{tabbing}
aa\=bb\=cc\= dd\= \kill
~~~~~~~~~~~$\cdot\cdot\cdot$  \\
if ($\mi{write}==1~\&\&~\mi{currSize}< n$) \\
*\>if ($\mi{dataIn}~!\!= \mi{Val}$) \\
\Tt begin \\
\Tt     $\data[\mi{wrPnt}]  = \mi{dataIn}$; \\
\Tt     $\mi{wrPnt}  = \mi{wrPnt}+1$; \\
\Tt   end \\
~~~~~~~~~~~$\cdot\cdot\cdot$  \\
\end{tabbing}}
\vspace{-25pt}
\caption{A buggy fragment of Verilog code describing \fifo}
\vspace{5pt}
\label{fig:bug}
\end{wrapfigure}
 In this section, we give an example of bug
detection by invariant generation for a FIFO buffer. Our objective
here is threefold.  First, we want to give an example of a bug that
can be overlooked by testing and guessing the unwanted properties to
check (see Subsection~\ref{ssec:hard_bug}). Second, we want to
substantiate the intuition of Subsection~\ref{ssec:des_cov} that
property generation by PQE (in our case, invariant generation by PQE)
has the same reasons to be effective as testing. In particular, by
taking out different clauses one generates invariants relating to
different parts of the design. So, taking out a clause of the buggy
part is likely to produce an unwanted invariant.  Third, we want to
give an example of an invariant that can be easily identified as
unwanted\footnote{Let $P(\hat{S})$ be an invariant for a circuit $N$ depending only on a
subset $\hat{S}$ of the state variables $S$. Identifying $P$ as an
unwanted invariant is much easier if $\hat{S}$ is meaningful from the
high-level view of the design.  Suppose, for instance, that
assignments to $\hat{S}$ specify values of a high-level variable
$v$. Then $P$ is unwanted if it claims unreachability of a value of
$v$ that is supposed to be reachable. Another simple example is that
assignments to $\hat{S}$ specify values of high-level variables $v$
and $w$ that are supposed to be \ti{independent}. Then $P$ is unwanted
if it claims that some combinations of values of $v$ and $w$ are
unreachable. (This may mean, for instance, that an assignment operator
setting the value of $v$ erroneously involves the variable $w$.)
}.
%
%
\subsection{Buffer description}
\label{ssec:buff_descr}
Consider a FIFO buffer that we will
refer to as \fifo.  Let $n$ be the number of elements of \fifo and
\data denote the data buffer of \fifo.  Let each
$\data[i],i=1,\dots,n$ have $p$ bits and be an integer where $0 \leq
\data[i] < 2^p$.  A fragment of the Verilog code describing \fifo is
shown in Fig~\ref{fig:bug}. This fragment has a buggy line marked with
an asterisk. In the correct version without the marked line, a new
element $\mi{dataIn}$ is added to \data if the \ti{write} flag is on
and \fifo has less than $n$ elements.  Since \data can have any
combination of numbers, all \data states are supposed to be reachable.
However, due to the bug, the number $\mi{Val}$ cannot appear in \data.
(Here $\mi{Val}$ is some constant \mbox{$0\!<\!\mi{Val}\!<\!
  2^p$}. We assume that the buffer elements are initialized to 0.) So,
\fifo has an \ti{op-state reachability bug} since it cannot reach
operative states where an element of \data equals $\mi{Val}$.

%
%
\subsection{Bug detection by invariant generation}
Let $N$ be a circuit implementing \fifo. Let $S$ be the set of state
variables of $N$ and \bm{\Sub{S}{data} \subset S} be the subset
corresponding to the data buffer \data.  We used \dpqe, \Eg and \egp
to generate invariants of $N$ as described in
Section~\ref{sec:inv_gen}.  Note that an invariant $Q$ depending only
on \Sub{S}{data} is an \tb{unwanted} one.  If $Q$ holds for $N$, some
states of \data are unreachable. Then \fifo has an op-state
reachability bug since every state of \data is supposed to be
reachable. To generate invariants, we used the formula $F_k = I(S_0)
\wedge T(S_0,V_0,S_1) \wedge \dots \wedge T(S_{k-1},V_{k-1},S_k)$
introduced in Subsection~\ref{ssec:check_unreach}. Here $I$ and $T$
describe the initial state and the transition relation of $N$
respectively and $S_j$ and $V_j$ denote state variables and
combinational input variables of $j$-th time frame respectively.
First, we used a PQE solver to generate a local invariant $H(S_k)$
obtained by taking a clause $C$ out of \prob{\Abs{k}}{F_k} where
$\Abs{k} = S_0 \cup V_0 \cup \dots \cup S_{k-1} \cup V_{k-1}$.  So,
\mbox{$\prob{\Abs{k}}{F_k}\equiv$} $H \wedge$ \prob{\Abs{k}}{F_k
  \setminus \s{C}}. (Since $F_k \imp H$, no state falsifying $H$ can
be reached in $k$ transitions.)  In the experiment, we took out only
clauses of $F_k$ containing an \ti{unquantified variable}, \ie a state
variable of the $k$-th time frame. The time limit for solving the PQE
problem of taking out a clause was set to 10 sec.

\begin{table}[h]
\centering
\vspace{4pt}
\scriptsize
\caption{\small{FIFO buffer with $n$ elements of 32 bits. Time
    limit is 10 sec. per PQE problem}}
\begin{tabular}{|p{20pt}|p{18pt}|p{21pt}|p{20pt}|p{20pt}|p{18pt}|p{22pt}|p{22pt}|p{22pt}|p{17pt}|p{17pt}|p{20pt}|p{22pt}|p{22pt}|p{20pt}|} \hline
 buff.     & lat-  &  time  &\multicolumn{3}{c|}{total \ti{pqe} probs}  &\multicolumn{3}{c|}{finished \ti{pqe} probs}  &\multicolumn{3}{c|}{unwant. invar}  &\multicolumn{3}{c|}{runtime (s.)} \\ \cline{4-15}
size & ches & fra- & \ti{ds-} &\ti{eg-} &\ti{eg-}& \ti{ds-} &\ti{eg-} & \ti{eg-} & \ti{ds-} & \ti{eg-} & \ti{eg-} & \ti{ds-} & \ti{eg-}&\ti{eg-} \\ 
~~$n$ & &mes & \ti{pqe} & \ti{pqe} & \ti{pqe}$^+$    &   \ti{pqe}  &   \ti{pqe}   &   \ti{pqe}$^+$     &    \ti{pqe}      &  \ti{pqe}         &   \ti{pqe}$^+$            &    \ti{pqe}      &       \ti{pqe}   &  \ti{pqe}$^+$            \\ \hline
 ~~8    & ~300      &~ 5  &  1,236  &  311    & ~\tb{8}   &~2\%  & \tb{36\%} & 35\% &no &\tb{yes}&\tb{yes}&12,141&2,138  &\tb{52} \\ \hline
  ~~8    & ~300      &~ 10  & ~560   & 737     &~\tb{39}  &~2\%  & 1\% & \tb{3\%}  &\tb{yes} &\tb{yes}&\tb{yes}&5,551&7,681    &\tb{380}         \\ \hline
  ~~16    & ~560      &~ 5  & 2,288   & 2,288  &~\tb{16}   &~1\% & 65\% & \tb{71\%}  &no &no&\tb{yes}& 22,612    & 9,506    &\tb{50}        \\ \hline
  ~~16   & ~560      &~ 10  & 653 &  2,288   &~\tb{24} &~1\%  & 36\% & \tb{38\%}  &\tb{yes} &no&\tb{yes}&6,541  &16,554&\tb{153}      \\ \hline  
\end{tabular}                
\vspace{5pt}
\label{tbl:buff}
\end{table}

For each clause $Q$ of every local invariant $H$ generated by PQE, we
checked if $Q$ was a global invariant. Namely, we used a public
version of \ict~\cite{ic3,ic3_impl} to verify if the property $Q$ held
(by showing that no reachable state of $N$ falsified $Q$).  If so, and
$Q$ depended only on variables of \Sub{S}{data}, $N$ had an
\ti{unwanted invariant}. Then we stopped invariant generation. The
results of the experiment for buffers with 32-bit elements are given
in Table~\ref{tbl:buff}.  When picking a clause to take out, \ie a
clause with a state variable of $k$-th time frame, one could make a
good choice by pure luck.  To address this issue, we picked clauses to
take out \ti{randomly} and performed 10 different runs of invariant
generation and then computed the average value. So, the columns four
to twelve of Table~\ref{tbl:buff} actually give the average value of
10 runs.

Let us use the first line of Table~\ref{tbl:buff} to explain its
structure. The first two columns show the number of elements in \fifo
implemented by $N$ and the number of latches in $N$ (8 and 300).  The
third column gives the number $k$ of time frames (\ie 5). The next
three columns show the total number of PQE problems solved by a PQE
solver before an unwanted invariant was generated, \eg 8 problems for
\egp. On the other hand, \dpqe failed to find an unwanted invariant
and had to solve \ti{all} 1,236 PQE problems of taking out a clause of
$F_k$ with an unquantified variable. The following three columns show
the share of PQE problems \ti{finished} in the time limit of 10 sec.
For instance, \Eg finished 36\% of 311 problems.  The next three
columns show if an unwanted invariant was generated by a PQE solver.
(\Eg and \egp found one whereas \dpqe did not.) The last three columns
give the total run time.  Table~\ref{tbl:buff} shows that only \egp
managed to generate an unwanted invariant for all four instances of
\fifo. This invariant asserted that \fifo cannot reach a state where
an element of \data equals $\mi{Val}$.
%
%
\subsection{Detection of the bug by conventional methods}
\label{ssec:hard_bug}
The bug above (or its modified version) can be overlooked by
conventional methods.  Consider, for instance, testing.  It is hard to
detect this bug by \ti{random} tests because it is exposed only if one
tries to add \ti{Val} to \fifo.  The same applies to testing using the
\ti{line coverage} metric~\cite{coverage}. On the other hand, a test
set with 100\% \ti{branch} coverage~\cite{coverage} will find this
bug. (To invoke the \ti{else} branch of the \ti{if} statement marked
with '*' in Fig.~\ref{fig:bug}, one must set $\mi{dataIn}$ to
$\mi{Val}$.)  However, such a test set is not hard to beat. Consider a
slightly modified bug. Namely, assume that the buggy line
is~~\ti{if}\,(($\mi{dataIn}~!\!\!= \mi{Val})~ \&\&
~(\mi{dataIn}~!\!\!= \mi{Val}')$) and in the \ti{else} branch of this
\ti{if} statement, the element $\mi{Val}'$ is pushed to \fifo. So,
instead of just ignoring the element $\mi{Val}$, the modified bug
\ti{replaces} it with $\mi{Val}'$ in \fifo.  Similarly to the bug of
Subsection~\ref{ssec:buff_descr}, $\mi{Val}$ does not appear in
\fifo. So, the modified bug can also be detected by generating an
unwanted invariant. On the other hand, it can be overlooked by tests
with 100\% branch coverage because either branch of the modified
\ti{if} statement can be activated \ti{without} assigning
$\mi{dataIn}$ to $\mi{Val}$.

Now consider the ``manual'' generation of unwanted properties. It is
virtually impossible to guess an unwanted \ti{invariant} of \fifo
exposing the bug of Subsection~\ref{ssec:buff_descr} unless one knows
exactly what this bug is.  However, one can detect this bug by
checking a property asserting that the element $\mi{dataIn}$ must
appear in the buffer if \fifo is ready to accept it, \ie  $(\mi{write}
== 1) \&\& (\mi{currSize} < n)$ holds. Note that this is a
\ti{non-invariant} property involving states of different time
frames. The more time frames are used in such a property, the more
guesswork is required to pick it.
Consider, for instance, the following bug. Suppose \fifo does not
reject the element $\mi{Val}$.  So, the non-invariant property above
holds.  However, if $\mi{dataIn} == \mi{Val}$, then \fifo changes the
\ti{previous} accepted element if that element was $\mi{Val}$ too. So,
\fifo cannot have two consecutive elements $\mi{Val}$.  Our method
will detect this bug via generating an unwanted invariant falsified by
states with consecutive elements $\mi{Val}$. One can also identify
this bug by checking a property involving two consecutive elements of
\fifo. But picking it requires a lot of guesswork and so the modified
bug can be easily overlooked.

\section{Experiments With HWMCC Benchmarks}
\label{sec:inv_gen_exper}

In this section, we describe three experiments with 98 multi-property
benchmarks of the HWMCC-13 set~\cite{hwmcc13}.  (We use this set
because it has a multi-property track, see the explanation below.)
The number of latches in those benchmarks range from 111 to 8,000.
More details about the choice of benchmarks and the experiments can be
found in Appendix~\ref{app:exper2}. Each benchmark consists of a
sequential circuit $N$ and invariants $P_0,\dots,P_m$ to prove. Like
in Section~\ref{sec:inv_gen}, we call $\Sub{P}{agg}=P_0 \wedge \dots
\wedge P_m$ the \ti{aggregate invariant}.  In experiments 2 and 3 we
used PQE to generate new invariants of $N$. Since every invariant $P$
implied by \Sub{P}{agg} is a desired one, the necessary condition for
$P$ to be \ti{unwanted} is $\Sub{P}{agg} \not\imp P$. The conjunction
of many invariants $P_i$ produces a stronger invariant \Sub{P}{agg},
which makes it \ti{harder} to generate $P$ not implied by
\Sub{P}{agg}. (This is the reason for using multi-property benchmarks
in our experiments.)  The circuits of the HWMCC-13 set are
\ti{anonymous}, so, we could not know if an unreachable state is
supposed to be reachable.  For that reason, we just generated
invariants not implied by \Sub{P}{agg} without deciding if some of
them were unwanted.

Similarly to the experiment of Section~\ref{sec:fifo_exper}, we used
the formula $F_k = I(S_0) \wedge T(S_0,V_0,S_1)$ $\wedge \dots \wedge
T(S_{k-1},V_{k-1},S_k)$ to generate invariants. The number $k$ of time
frames was in the range of \mbox{$2\!\leq\!k\!  \leq\!10$}.  As in the
experiment of Section~\ref{sec:fifo_exper}, we took out only clauses
containing a state variable of the $k$-th time frame. In all
experiments, the \tb{time limit} for solving a PQE problem was set to
10 sec.

%
%
\subsection{Experiment 1}
In the first experiment, we generated a \ti{local invariant} $H$ by
taking out a clause $C$ of \prob{\Abs{k}}{F_k} where $\Abs{k}=S_0 \cup
V_0 \cup \dots \cup S_{k-\!1}\cup V_{k-\!1}$.  The formula $H$ asserts
that no state falsifying $H$ can be reached in $k$ transitions. Our
goal was to show that PQE can find $H$ for large formulas $F_k$ that
have hundreds of thousands of clauses.  We used \Eg to partition the
PQE problems we tried into two groups.  \ti{The first group} consisted
of 3,736 problems for which we ran \Eg with the time limit of 10
sec. and it never encountered a subspace \pnt{s_k} where $F_k$ was
satisfiable.  Here \pnt{s_k} is a full assignment to $S_k$. Recall
that only the variables $S_k$ are unquantified in
\prob{\Abs{k}}{F_k}. So, in every subspace \pnt{s_k}, formula $F_k$
was either unsatisfiable or $(F_k \setminus \s{C}) \imp C$. (The fact
that so many problems meet the condition of the first group came as a
big surprise.) \ti{The second group} consisted of 3,094 problems where
\Eg encountered subspaces where $F_k$ was satisfiable.

For the first group, \dpqe finished only 30\% of the problems within
10 sec. whereas \Eg and \egp finished 88\% and 89\% respectively. The
poor performance of \dpqe is due to not checking if $(F_k \setminus
\s{C}) \imp C$ in the current subspace. For the second group, \dpqe,
\Eg and \egp finished 15\%, 2\% and 27\% of the problems respectively
within 10 sec. \Eg finished far fewer problems because it used a
satisfying assignment as a proof of redundancy of $C$ (see
Subsection~\ref{ssec:disc2}).

To contrast PQE and QE, we employed a high-quality tool
\cad~\cite{cadet_qe,cadet_imp} to perform QE on the 98 formulas
\prob{\Abs{k}}{F_k} (one formula per benchmark).  That is, instead of
taking a clause out of \prob{\Abs{k}}{F_k} by PQE, we applied \cad to
perform full QE on this formula. (Performing QE on \prob{\Abs{k}}{F_k}
produces a formula $H(S_k)$ specifying \ti{all} states unreachable in
$k$ transitions.)  \cad finished only 25\% of the 98 QE problems with
the time limit of 600 sec. On the other hand, \egp finished 60\% of
the 6,830 problems of both groups (generated off \prob{\Abs{k}}{F_k})
within 10 sec. So, PQE can be much easier than QE if only a small part
of the formula gets unquantified.

%
%
\subsection{Experiment 2}
The second experiment was an extension of the first one. Its goal was
to show that PQE can generate invariants for realistic designs.  For
each clause $Q$ of a local invariant $H$ generated by PQE we used \ict
to verify if $Q$ was a global invariant.  If so, we checked if
$\Sub{P}{agg} \not\imp Q$ held. To make the experiment less time
consuming, in addition to the time limit of 10 sec. per PQE problem we
imposed a few more constraints.  The PQE problem of taking a clause
out of \prob{\Abs{k}}{F_k} terminated as soon as $H$ accumulated 5
clauses or more. Besides, processing a benchmark aborted when the
summary number of clauses of all formulas $H$ generated for this
benchmark reached 100 or the total run time of all PQE problems
generated off \prob{\Abs{k}}{F_k} exceeded 2,000 sec.

%
%
\begin{wraptable}{l}{2.1in}
\centering
\vspace{2pt}
\scriptsize
\captionsetup{justification=centering}
\caption{\small{Invariant generation}}
\vspace{-4pt}
  \begin{tabular}{|p{28pt}|p{26pt}|p{20pt}|p{20pt}|p{30pt}|} \hline
 pqe &\#bench  & \multicolumn{3}{c|}{results} \\ \cline{3-5}
 solver &marks   & local &glob. & not imp.\\
        &  & invar.  &invar. & by $P_{agg}$\\ \hline
\ti{ds-pqe}     &~98   & 5,556   & 2,678 &2,309\\ \hline
\ti{eg-pqe}     &~98   & \tb{9,498}   & \tb{4,839} & \tb{4,009}\\ \hline
\ti{eg-pqe}$^+$ &~98   & 9,303   & 4,773 & 3,940\\ \hline 
\end{tabular}                
\vspace{4pt}
\label{tbl:all_inv_gen}
\end{wraptable}

Table~\ref{tbl:all_inv_gen} shows the results of the experiment. The
third column gives the number of local single-clause invariants (\ie
the total number of clauses in all $H$ over all benchmarks). The
fourth column shows how many local single-clause invariants turned out
to be global. (Since global invariants were extracted from $H$ and the
summary size of all $H$ could not exceed 100, the number of global
invariants per benchmark could not exceed 100.) The last column gives
the number of global invariants not implied by \spe. So, these
invariants are candidates for checking if they are
unwanted. Table~\ref{tbl:all_inv_gen} shows that \Eg and \egp
performed much better than \dpqe.

%
%
\subsection{Experiment 3}
\label{ssec:ic3_invars}
To prove an invariant $P$ true, \ict conjoins it with clauses
$Q_1\!,\dots,\!Q_n$ to make $P\!\wedge Q_1\!\wedge \dots \wedge Q_n$
inductive.  If \ict succeeds, every $Q_i$ is an invariant. Moreover,
$Q_i$ may be an \ti{unwanted} invariant. The goal of the third
experiment was to demonstrate that PQE and \ict, in general, produce
different invariant clauses. The intuition here is twofold. First,
\ict generates clauses $Q_i$ to prove a \ti{predefined} invariant
rather than find an unwanted one. Second, the closer $P$ to being
inductive, the fewer new invariant clauses are generated by
\ict. Consider the circuit \Sub{N}{triv} that simply stays in the
initial state \pqnt{s}{ini} (Section~\ref{sec:inv_gen}). Any invariant
satisfied by \pqnt{s}{ini} is already \ti{inductive} for
\Sub{N}{triv}. So, IC3 will not generate \ti{a single new invariant}
clause. On the other hand, if the correct circuit is supposed to leave
the initial state, \Sub{N}{triv} has unwanted invariants that our
method will find.

In this experiment, we used \ict to generate $P^*_{\mi{agg}}$, an
\ti{inductive} version of \Sub{P}{agg}. The experiment showed that in
88\% cases, an invariant clause generated by \egp and not implied by
\Sub{P}{agg} was not implied by $P^*_{\mi{agg}}$ either.  (See
Appendix~\ref{ssec:exper3} for more detail.) 

\section{Properties Mimicking Symbolic Simulation}
\label{sec:comb_exper}
Let $M(X,V,W)$ be a combinational circuit where $X,V,W$ are internal,
input and output variables.  In this section, we describe generation
of properties of $M$ that mimic symbolic
simulation~\cite{SymbolSim}. Every such a property $Q(V)$ specifies a
cube of tests that produce the same values for a given subset of
variables of $W$. We chose generation of such properties because
deciding if $Q$ is an unwanted property is, in general, simple. The
procedure for generation of these properties is slightly different
from the one presented in Section~\ref{sec:prop_gen}.

Let $F(X,V,W)$ be a formula specifying $M$. Let $B(W)$ be a clause.
Let $H(V)$ be a solution to the PQE problem of taking a clause $C \in
F$ out of \Prob{X}{W}{F \wedge B}. That is, $\Prob{X}{W}{F \wedge B}
\equiv H \wedge$ \Prob{X}{W}{(F \setminus \s{C}) \wedge B}. Let $Q(V)$
be a clause of $H$. Then $M$ has the \tb{property} that for every full
assignment \pnt{v} to $V$ falsifying $Q$, it produces an output
\pnt{w} falsifying $B$ (see Proposition~\ref{prop:symb_sim} of
Appendix~\ref{app:proofs}). Suppose, for instance, $Q\!=\!v_1 \vee\,
\overline{v}_{10} \vee v_{30}$ and $B\!=\!w_2 \vee\,
\overline{w}_{40}$. Then for every \pnt{v} where
$v_1\!=\!0,v_{10}\!=\!1,\!v_{30}\!=\!0$, the circuit $M$ produces an
output where $w_2 = 0, w_{40}=1$.  Note that $Q$ is implied by $F
\wedge B$ rather than $F$. So, it is a property of $M$ under
constraint $B$ rather than $M$ alone. The property $Q$ is
\tb{unwanted} if there is an input falsifying $Q$ that \ti{should not}
produce an output falsifying $B$.

To generate combinational circuits, we unfolded sequential circuits of
the set of 98 benchmarks used in Section~\ref{sec:inv_gen_exper} for
invariant generation.  Let $N$ be a sequential circuit. (We reuse the
notation of Section~\ref{sec:inv_gen}).  Let
$M_k(S_0,V_0,\dots,S_{k-1},V_{k-1},S_k)$ denote the combinational
circuit obtained by unfolding $N$ for $k$ time frames. Here $S_j,V_j$
are state and input variables of $j$-th time frame respectively. Let
$F_k$ denote the formula $I(S_0) \wedge T(S_0,V_0,S_1) \wedge \dots
\wedge T(S_{k-1},V_{k-1},S_k)$ describing the unfolding of $N$ for $k$
time frames. Note that $F_k$ specifies the circuit $M_k$ above under
the input constraint $I(S_0)$. Let $B(S_k)$ be a clause. Let
$H(S_0,V_0,\dots,V_{k-1})$ be a solution to the PQE problem of taking
a clause $C \in F_k$ out of formula \prob{S_{1,k}}{F_k \wedge B}. Here
$S_{1,k} = S_1 \cup \dots \cup S_k$.  That is, $\prob{S_{1,k}}{F_k
  \wedge B} \equiv H \wedge$ \prob{S_{1,k}}{(F_k \setminus \s{C})
  \wedge B}. Let $Q$ be a clause of $H$. Then for every assignment
(\pqnt{s}{ini},\pnt{v_0},\dots,\ppnt{v}{k-1}) falsifying $Q$, the
circuit $M_k$ outputs \pent{s}{k} falsifying $B$. (Here \pqnt{s}{ini}
is the initial state of $N$ and \pent{s}{k} is a state of the last
time frame.)

In the experiment, we used \dpqe,\Eg and \egp to solve 1,586 PQE
problems described above.  In Table~\ref{tbl:symb_sim}, we give a
sample of results by \egp. (More details about this experiment can be
found in Appendix~\ref{app:symb_sim}.) Below, we use the first line of
Table~\ref{tbl:symb_sim} to explain its structure. The first column
gives the benchmark name (6s326). The next column shows that 6s326 has
3,342 latches. The third column gives the number of time frames used
to produce a combinational circuit $M_k$ (here $k=20$). The next
column shows that the clause $B$ introduced above consisted of 15
literals of variables from $S_k$.  (Here and below we still use the
index $k$ assuming that $k = 20$.)  The literals of $B$ were generated
\ti{randomly}. When picking the length of $B$ we just tried to
simulate the situation where one wants to set a particular \ti{subset}
of output variables of $M_k$ to specified values.  The next two
columns give the size of the subcircuit $M'_k$ of $M_k$ that feeds the
output variables present in $B$.  When computing a property $H$ we
took a clause out of formula \prob{S_{1,k}}{F'_k \wedge B} where
$F'_k$ specifies $M'_k$ instead of formula \prob{S_{1,k}}{F_k \wedge
  B} where $F_k$ specifies $M_k$. (The logic of $M_k$ not feeding a
variable of $B$ is irrelevant for computing $H$.)  The first column of
the pair gives the number of gates in $M'_k$ (\ie 348,479). The second
column provides the number of input variables feeding $M'_k$ (\ie
1,774).  Here we count only variables of $V_0 \cup \dots \cup V_{k-1}$
and ignore those of $S_0$ since the latter are already assigned values
specifying the initial state \pqnt{s}{ini} of $N$.

%
%
\begin{wraptable}{l}{3.1in}
\centering
\scriptsize
\captionsetup{justification=centering}
\caption{\small{Property generation for combinational \\circuits}}
  \begin{tabular}{|p{22pt}|p{18pt}|p{16pt}|p{15pt}|p{27pt}|p{22pt}|p{15pt}|p{15pt}|p{16pt}|p{20pt}|} \hline
name & lat- &time &size & \multicolumn{2}{c|}{subc. $M'_k$} &  \multicolumn{4}{c|}{results}\\  \cline{5-10}
     &ches &  fra- & of   & gates     & inp.    &  min &max & time&3-val.\\ 
      &  & mes   & $B$         &          & vars        &     &    & (s.)&sim.\\ \hline
 6s326&3,342  &~20   &~15     & 348,479  &1,774   &27&28& 2.9 &~\tb{no}  \\ \hline
 6s40m& 5,608 &~20  &~15     &406,474 &3,450  &27 &29 & 1.1 &~\tb{no}  \\ \hline
 6s250& 6,185  &~20    &~15    & 556,562 &2,456  &50 &54 & 0.8 &~\tb{no} \\ \hline
 6s395 & 463  &~30   &~15     &36,088 & 569   &24&26 &0.7  &~yes \\ \hline
 6s339& 1,594 &~30   &~15     &179,543 &3,978   &70 &71 & 3.1 &~\tb{no}\\ \hline
 6s292 & 3,190  &~30    &~15   & 154,014  & 978  &86 &89 & 1.1 &~\tb{no}  \\ \hline
 6s143 & 260  &~40   &~15  & 551,019   & 16,689  &526 &530 & 2.5 &~yes  \\ \hline
 6s372 &1,124  &~40   &~15       &295,626&  2,766 &513 &518 &1.7 &~\tb{no} \\ \hline
  6s335 &1,658  &~40   &~15   & 207,787& 2,863  &120 &124  & 6.7  &~\tb{no} \\ \hline 
 6s391  &2,686 &~40   &~15     & 240,825& 7,579 &340 & 341& 8.9 &~\tb{no}  \\ \hline
\end{tabular}                
\label{tbl:symb_sim}
\end{wraptable}

The next four columns show the results of taking a clause out
of\linebreak \prob{S_{1,k}}{F'_k\!\wedge\!B}. For each PQE problem the
time limit was set to 10 sec. Besides, \egp terminated as soon as 5
clauses of property\linebreak $H(S_0,V_0,\dots,V_{k-1})$ were
generated.  The first three columns out of four describe the minimum
and maximum sizes of clauses in $H$ and the run time of \egp. So, it
took for \egp 2.9 sec.  to produce a formula $H$ containing clauses of
sizes from 27 to 28 variables. A clause $Q$ of $H$ with 27 variables,
for instance, specifies $2^{1747}$ tests falsifying $Q$ that produce
the same output of $M'_k$ (falsifying the clause $B$). Here
\mbox{$1747 = 1774-27$} is the number of input variables of $M'_k$ not
present in $Q$. The last column shows that at least one clause $Q$ of
$H$ specifies a property that cannot be produced by 3-valued
simulation (a version of symbolic simulation~\cite{SymbolSim}).  To
prove this, one just needs to set the input variables of $M'_k$
present in $Q$ to the values falsifying $Q$ and run 3-valued
simulation. (The remaining input variables of $M'_k$ are assigned a
don't-care value.) If after 3-valued simulation some output variable
of $M'_k$ is assigned a don't-care value, the property specified by
$Q$ cannot be produced by 3-valued simulation.

Running \dpqe, \Eg and \egp on the 1,586 PQE problems mentioned above
showed that a) \Eg performed poorly producing properties only for 28\%
of problems; b) \dpqe and \egp showed much better results by
generating properties for 62\% and 66\% of problems respectively.
When \dpqe and \egp succeeded in producing properties, the latter
could not be obtained by 3-valued simulation in 74\% and 78\% of cases
respectively.

\section{Some Background}
\label{sec:bg}
In this section, we discuss some research relevant to PQE and property
generation.  Information on BDD based QE can be found
in~\cite{bryant_bdds1,bdds_qe}. SAT based QE is described in
\cite{blocking_clause,fabio,cofactoring,cav09,cav11,cmu,nik1,nik2,cadet_qe}.
Our first PQE solver called \dpqe was introduced in~\cite{hvc-14}. It
was based on redundancy based reasoning presented in~\cite{fmcad12} in
terms of variables and in~\cite{fmcad13} in terms of clauses. The main
flaw of \dpqe is as follows. Consider taking a clause $C$ out of
\prob{X}{F}. Suppose \dpqe proved $C$ redundant in a subspace where
$F$ is \ti{satisfiable} and some \ti{quantified} variables are
assigned. The problem is that \dpqe cannot simply assume that $C$ is
redundant every time it re-enters this subspace~\cite{qe_learn}. The
root of the problem is that redundancy is a \ti{structural} rather
than semantic property. That is, redundancy of a clause in a formula
$\xi$ (quantified or not) does not imply such redundancy in every
formula logically equivalent to $\xi$. Since our current
implementation of \egp uses \dpqe as a subroutine, it has the same
learning problem.  We showed in~\cite{cert_tech_rep} that this problem
can be addressed by the machinery of certificate clauses. So, the
performance of PQE can be drastically improved via enhanced learning
in subspaces where $F$ is satisfiable.

We are unaware of research on property generation for combinational
circuits. As for invariants, the existing procedures typically
generate some auxiliary \ti{desired} invariants to prove a predefined
property (whereas our goal is to generate invariants that are
\ti{unwanted}). For instance, they generate loop
invariants~\cite{loop_invars} or invariants relating internal points
of circuits checked for equivalence ~\cite{ec_invars}. Another example
of auxiliary invariants are clauses generated by \ict to make an
invariant inductive~\cite{ic3}.  As we showed in
Subsection~\ref{ssec:ic3_invars}, the invariants produced by PQE are,
in general, different from those built by \ict.

\section{Conclusions And Directions For Future Research}
\label{sec:concl}
We consider Partial Quantifier Elimination (PQE) on propositional CNF
formulas with existential quantifiers. In contrast to \ti{complete}
quantifier elimination, PQE allows to unquantify a \ti{part} of the
formula.  We show that PQE can be used to generate properties of
combinational and sequential circuits. The goal of property generation
is to check if there is an \ti{unwanted} property identifying a bug.
We used PQE to generate an unwanted invariant for a FIFO buffer
exposing a non-trivial bug.  We also applied PQE to invariant
generation for HWMCC benchmarks. Finally, we used PQE to generate
properties of combinational circuits mimicking symbolic simulation.
Our experiments show that PQE can efficiently generate properties for
realistic designs.

There are at least three directions for future research. The first
direction is to improve the performance of PQE solving. As we
mentioned in Section~\ref{sec:bg}, the most promising idea here is to
enhance the power of learning in subspaces where the formula is
satisfiable.  The second direction is to use the improved PQE solvers
to design new, more efficient algorithms for well-known problems like
SAT, model checking and equivalence checking. The third direction is
to look for new problems that can be solved by PQE.

\bibliographystyle{IEEEtran}
\bibliography{short_sat,local,l1ocal_hvc}
\vspace{25pt}
\appendix
\noindent{\large \tb{Appendix}}
\section{PQE And Interpolation}
\label{app:interp}
In this appendix, we recall the observation of~\cite{tech_rep_pc_lor}
that interpolation is a special case of PQE. Let $A(X,Y) \wedge
B(Y,Z)$ be an unsatisfiable formula. Let $I(Y)$ be a formula such that
$A \wedge B \equiv I \wedge B$ and $A \imp I$. Then $I$ is called an
\ti{interpolant}~\cite{craig}.  Now, let us show that interpolation
can be described in terms of PQE. Consider the formula \prob{W}{A
  \wedge B} where $A$ and $B$ are the formulas above and $W = X \cup
Z$. Let $A^*(Y)$ be obtained by taking $A$ out of the scope of
quantifiers i.e. \mbox{\prob{W}{A \wedge B} $\equiv
  A^*\wedge$\prob{W}{B}}. Since $A \wedge B$ is unsatisfiable, $A^*
\wedge B$ is unsatisfiable too. So, \mbox{$A\wedge B \equiv A^* \wedge
  B$}. If $A \imp A^*$, then $A^*$ is an interpolant.

The \ti{general case} of PQE that takes $A$ out of \prob{W}{A \wedge
  B} is different from the instance above in three aspects. First, one
does not assume that $A \wedge B$ is unsatisfiable.  Second, one does
not assume that $\V{B} \subset \V{A \wedge B}$. In other words, in
general, PQE \ti{does not} remove any variables from the original
formula.  Third, a solution $A^*$ is implied by $A \wedge B$ rather
than by $A$ alone.  Summarizing, one can say that interpolation is a
special case of PQE.

\section{Proofs Of Propositions}
 \setcounter{proposition}{0}
 \label{app:proofs}
 %
 %
\begin{proposition}
Let $H$ be a solution to the PQE problem of
Definition~\ref{def:pqe_prob}.  That is $\prob{X}{F}\equiv
H\wedge\prob{X}{F \setminus G}$. Then $F \imp H$ (i.e. $F$ implies
$H$).
\end{proposition}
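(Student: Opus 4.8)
The plan is to unfold the definition of $\equiv$ for quantified formulas (Definition~\ref{def:formula-equiv}) and show that every assignment satisfying $F$ also satisfies $H$. Since $H$ depends only on the free variables $Y$ (more precisely $H(Y)$), it suffices to show: for every full assignment $\pnt{q}$ to $\V{F}\cup\V{H}$ that satisfies $F$, the cofactor $\cof{H}{q}$ equals $1$. First I would fix such a $\pnt{q}$ and let $\pnt{y}$ be its restriction to the free variables $Y$. Because $\pnt{q}$ satisfies $F$, it in particular satisfies $F\setminus G$, so the formula $\cof{(F\setminus G)}{y}$ is satisfiable (it is satisfied by the restriction of $\pnt{q}$ to $X$). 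Hence $\cof{(\prob{X}{F\setminus G})}{y} = 1$, i.e. the quantified subformula on the right-hand side evaluates to true in the subspace $\pnt{y}$.

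Next I would use the hypothesis $\prob{X}{F}\equiv H\wedge\prob{X}{F\setminus G}$. Evaluating both sides in the subspace $\pnt{y}$: the left side $\cof{(\prob{X}{F})}{y}$ is $1$ as well, since $\pnt{q}$ satisfies $F$ and thus witnesses satisfiability of $\cof{F}{y}$. By the equivalence, $\cof{(H\wedge\prob{X}{F\setminus G})}{y} = 1$, which forces $\cof{H}{y} = 1$ (a conjunction is $1$ only if each conjunct is). Since $H$ has no quantified variables and depends only on $Y$, $\cof{H}{q} = \cof{H}{y} = 1$. As $\pnt{q}$ was an arbitrary satisfying assignment of $F$, this shows $F\imp H$.

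The only mildly delicate point — and the step I would be most careful about — is matching the semantic bookkeeping to Definition~\ref{def:formula-equiv}, which is phrased in terms of cofactors under \emph{full} assignments to $\V{G}\cup\V{H}$ rather than the more familiar "every model of one is a model of the other". Concretely I must make sure that: (i) the variables quantified on both sides are the same set $X$, so that evaluating at $\pnt{y}$ (an assignment to $Y\supseteq$ the free variables) is well-defined on both sides; (ii) $H$ genuinely has scope only over $Y$, so that its cofactor under $\pnt{q}$ is determined by $\pnt{y}$ alone; and (iii) the equality $\cof{(\prob{X}{F})}{y}=1$ really does follow from $\pnt{q}\models F$, which is just the observation that an existentially quantified formula's cofactor is $1$ iff the quantifier-free body is satisfiable in that subspace. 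None of these involves real computation; the proof is essentially a two-line chase once the definitions are lined up. If one prefers, an even shorter route is to note that $F\wedge(F\setminus G)\equiv F$ gives $F\equiv F\wedge\prob{X}{F\setminus G}$ trivially at the propositional level, but the cleanest rigorous argument is the cofactor chase above.
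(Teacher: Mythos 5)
Your argument is correct, but it goes through a different route than the paper. The paper's proof is a one-line algebraic manipulation at the formula level: conjoin $H$ to both sides of $\prob{X}{F}\equiv H\wedge\prob{X}{F\setminus G}$, observe that the right-hand side is unchanged, so $H\wedge\prob{X}{F}\equiv\prob{X}{F}$, hence $\prob{X}{F}\imp H$ and thus $F\imp H$. You instead unfold Definition~\ref{def:formula-equiv} and do a pointwise cofactor chase: for any \pnt{q} satisfying $F$ with $Y$-part \pnt{y}, the left side evaluates to $1$ in subspace \pnt{y}, so the conjunction on the right does too, forcing $\cof{H}{y}=1$. Both arguments hinge on the same trivial fact (a conjunction implies its conjuncts, and $F\imp\prob{X}{F}$ on the free variables); the paper's version is shorter and stays at the level of equivalences, while yours is more elementary and makes the semantic bookkeeping of Definition~\ref{def:formula-equiv} explicit, which is a reasonable trade. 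Two minor remarks: your opening step showing $\cof{(\prob{X}{F\setminus G})}{y}=1$ is not needed (the conjunction being $1$ already forces $\cof{H}{y}=1$, whatever the other conjunct does), and your closing aside that $F\equiv F\wedge\prob{X}{F\setminus G}$ gives the result ``trivially'' is incomplete as stated, since it does not use the hypothesis at all; the conclusion still requires the chain $F\imp\prob{X}{F}\equiv H\wedge\prob{X}{F\setminus G}\imp H$.
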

\begin{proof}
By conjoining both sides of the equality with $H$ one concludes
that\linebreak $H \wedge \prob{X}{F} \equiv
H\wedge\prob{X}{F \setminus G}$ and hence
$H \wedge \prob{X}{F} \equiv \prob{X}{F}$. Then $\prob{X}{F} \imp H$
and thus $F \imp H$.
\end{proof}
%
%
\begin{proposition}
Let a clause $C$ be blocked in a formula $F(X,Y)$ with respect to a
variable $x \in X$.  Then $C$ is redundant in \prob{X}{F}
i.e. \prob{X}{F \setminus \s{C}} $\equiv$ \prob{X}{F}.
\end{proposition}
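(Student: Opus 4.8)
The plan is to show that for every full assignment $\pnt{q}$ to $Y$, the cofactors $\cof{(\prob{X}{F})}{q}$ and $\cof{(\prob{X}{F\setminus\s{C}})}{q}$ coincide, which by Definition~\ref{def:formula-equiv} gives the desired equivalence. Fix such a $\pnt{q}$. One direction is immediate: since $F \setminus \s{C} \subseteq F$, any assignment to $X$ satisfying $\cof{F}{q}$ also satisfies $\cof{(F\setminus\s{C})}{q}$, so $\prob{X}{F} \imp \prob{X}{F\setminus\s{C}}$ and hence the same holds after cofactoring by $\pnt{q}$. The work is in the converse: assuming $\cof{(\prob{X}{F\setminus\s{C}})}{q} = 1$, produce a witness showing $\cof{(\prob{X}{F})}{q} = 1$.

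So suppose there is a full assignment $\pnt{x}$ to $X$ with $(\pnt{q},\pnt{x})$ satisfying $F \setminus \s{C}$. If $(\pnt{q},\pnt{x})$ already satisfies $C$, we are done. Otherwise $(\pnt{q},\pnt{x})$ falsifies $C$; in particular the literal of $x$ appearing in $C$ is falsified by $\pnt{x}$. The key step is to flip the value of $x$: let $\pnt{x}'$ agree with $\pnt{x}$ on $X \setminus \s{x}$ and disagree on $x$. Then $\pnt{x}'$ satisfies the literal of $x$ in $C$, so $(\pnt{q},\pnt{x}')$ satisfies $C$. It remains to check that $(\pnt{q},\pnt{x}')$ still satisfies every clause of $F \setminus \s{C}$. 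Any clause $D \in F\setminus\s{C}$ not containing variable $x$ is unaffected by the flip and is still satisfied. For a clause $D \in F\setminus\s{C}$ that does contain $x$: if $D$ contains the same literal of $x$ as $C$ does, then since $C$ and $D$ are not resolvable on $x$ (because $C$ is blocked with respect to $x$, so \emph{no} clause of $F$ is resolvable with $C$ on $x$), $D$ must also be satisfied by some other literal of $\pnt{q}$ together with the unchanged part of $\pnt{x}$ — wait, more directly: blockedness tells us $D$ does not contain the opposite literal of $x$; so the only literals of $D$ that the flip could turn from true to false are literals of $x$, but $D$ does not contain the literal of $x$ that was true under $\pnt{x}$ unless... let me re-argue cleanly: $D$ contains $x$ with some polarity. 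If that polarity is the same as in $C$, then blockedness (no clause resolvable with $C$ on $x$) is automatic and gives no info, but the flip makes this literal \emph{true}, so $D$ is satisfied. If that polarity is opposite to the one in $C$, then $D$ would be resolvable with $C$ on $x$, contradicting blockedness — so this case does not occur. Hence in all cases $(\pnt{q},\pnt{x}')$ satisfies $F$, so $\cof{(\prob{X}{F})}{q}=1$.

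The only subtlety — and the step I would be most careful about — is the bookkeeping in the last case analysis: making sure one correctly tracks which literal of $x$ is falsified by $\pnt{x}$ versus satisfied by $\pnt{x}'$, and correctly invokes Definition~\ref{def:blk_cls} (blocked with respect to $x$ means \emph{no} clause of $F$ resolves with $C$ on $x$, so in particular no clause of $F\setminus\s{C}$ contains the literal of $x$ opposite to the one in $C$). Once that is pinned down, the argument is routine. An alternative, essentially equivalent, route is to note that $C$ blocked in $F$ on $x$ implies $F \setminus \s{C}$ and $F$ have the same satisfying assignments when restricted appropriately — but the cofactor-wise argument above is the most direct and matches the paper's Definition~\ref{def:formula-equiv}. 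Summarizing: for each $\pnt{q}$, $(\imp)$ is trivial from $F\setminus\s{C}\subseteq F$, and $(\Leftarrow)$ follows by flipping $x$, with blockedness guaranteeing no clause of $F\setminus\s{C}$ is newly falsified.
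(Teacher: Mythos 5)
Your overall strategy (argue cofactor-wise over full assignments \pnt{q} to $Y$, then repair a satisfying assignment of $\cof{(F\setminus\s{C})}{q}$ by flipping $x$) is a legitimate, self-contained route, but the case analysis contains a genuine error. You dismiss the dangerous case with the claim that if a clause $D\in F\setminus\s{C}$ contains the literal of $x$ opposite to the one in $C$, then ``$D$ would be resolvable with $C$ on $x$, contradicting blockedness.'' That is not what the paper's definitions say. By Definition~\ref{def:resol}, two clauses are resolvable on $x$ only if they have opposite literals of \emph{exactly one} variable; so Definition~\ref{def:blk_cls} (no clause of $F$ resolvable with $C$ on $x$) does \emph{not} forbid clauses containing the opposite literal of $x$ --- it only guarantees that any such clause $D$ also has opposite literals with $C$ on some second variable (the resolvent would be tautological). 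The same misreading appears earlier in your draft (``blockedness tells us $D$ does not contain the opposite literal of $x$''). Consequently the flip of $x$ can, as far as your argument shows, falsify such a clause $D$, and the proof as written does not go through.

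The missing idea is the standard tautology step, and it is exactly where one must use that $(\pnt{q},\pnt{x})$ falsifies \emph{all} of $C$, not just its $x$-literal: if $D$ contains the literal of $x$ opposite to the one in $C$, blockedness gives a second variable $v\neq x$ on which $C$ and $D$ have opposite literals; since $(\pnt{q},\pnt{x})$ falsifies $C$'s literal of $v$, it satisfies $D$'s literal of $v$, and flipping $x$ leaves $v$ unchanged, so $D$ remains satisfied. The fact that your argument never invokes the falsification of the non-$x$ literals of $C$ is the tell-tale sign of the gap. With that repair your proof is correct and is genuinely different from the paper's: the paper instead cites the known result of~\cite{blocked_clause} that a blocked clause over only quantified variables can be removed from \prob{X}{G} without changing its value, and lifts it to the $(X,Y)$ setting by observing that for every full assignment \pnt{y} to $Y$ the clause $C$ is either satisfied by \pnt{y} or $\cof{C}{y}$ is still blocked in $\cof{F}{y}$ with respect to $x$. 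Your route, once fixed, has the advantage of being elementary and not relying on the cited result; the paper's is shorter because it reuses it.
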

\begin{proof}
It was shown in~\cite{blocked_clause} that adding a clause $B(X)$
blocked in $G(X)$ to the formula \prob{X}{G} does not change the value
of this formula.  This entails that removing a clause $B(X)$ blocked
in $G(X)$ does not change the value of \prob{X}{G} either. So, $B$ is
redundant in \prob{X}{G}. Let \pnt{y} be a full assignment to
$Y$. Then the clause $C$ is either satisfied by \pnt{y} or \cof{C}{y}
is blocked in \cof{F}{y} with respect to $x$. (The latter follows from
the definition of a blocked clause.) In either case \cof{C}{y} is
redundant in \prob{X}{\cof{F}{y}}. Since \cof{C}{y} is redundant in
\prob{X}{\cof{F}{y}} in every subspace \pnt{y}, $C$ is redundant in
\prob{X}{F}.
\end{proof}
%
%
\begin{proposition}
\label{prop:symb_sim}
Let $M(X,V,W)$ be a combinational circuit where $X,V,W$ are the
internal, input and output variables. Let $F(X,V,W)$ be a formula
specifying $M$. Let $B(W)$ be a clause.  Let $H(V)$ be a formula
obtained by taking a clause $C \in F$ out of \Prob{X}{W}{F \wedge
  B}. That is \Prob{X}{W}{F \wedge B} $\equiv H \wedge$ \Prob{X}{W}{(F
  \setminus \s{C}) \wedge B}. Let $Q(V)$ be a clause of $H$. Then for
every full assignment \pnt{v} to $V$ falsifying $Q$, the circuit $M$
outputs an assignment \pnt{w} falsifying the clause $B$.
\end{proposition}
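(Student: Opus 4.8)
The plan is to derive the statement from Proposition~\ref{prop:sol_impl} together with the defining property of the Tseitin encoding recalled in Example~\ref{exmp:gate_cnf}, namely that a total assignment satisfies $F$ exactly when it is a consistent assignment to the gates of $M$. The argument is essentially the contrapositive of the reasoning in Subsection~\ref{ssec:tests_props}: there a clause of the PQE solution is used to exhibit a test on which a missing clause changes the output, whereas here the same kind of implication is used in the opposite direction to pin down the output.

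First I would note that $H$ is a solution to the PQE problem of taking $C$ out of \Prob{X}{W}{F \wedge B}, so Proposition~\ref{prop:sol_impl}, applied with $F \wedge B$ playing the role of the formula and \s{C} the role of the removed subset, gives $F \wedge B \imp H$; since $Q$ is a clause of $H$, this yields $F \wedge B \imp Q$. Now fix a full assignment \pnt{v} to $V$ that falsifies $Q$, and let $(\pnt{x}^*,\pnt{v},\pnt{w}^*)$ be the execution trace of $M$ on input \pnt{v}. By the Tseitin semantics this total assignment to $X \cup V \cup W$ satisfies $F$. Assume, for contradiction, that $\pnt{w}^*$ does not falsify $B$. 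Since $B$ is a clause over $W$ and $\pnt{w}^*$ is a full assignment to $W$, it follows that $\pnt{w}^*$ satisfies $B$, hence $(\pnt{x}^*,\pnt{v},\pnt{w}^*)$ satisfies $F \wedge B$ and therefore, by the implication above, satisfies $Q$. But $Q$ depends only on $V$ and agrees with \pnt{v} on $V$, so it is falsified by $(\pnt{x}^*,\pnt{v},\pnt{w}^*)$ --- a contradiction. Hence $\pnt{w}^*$ falsifies $B$, and since $M$ is combinational, $\pnt{w}^*$ is the output $M$ produces on input \pnt{v}, which is exactly the claim.

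I do not expect a genuine obstacle: the proof is only a few lines once the objects are named. The two points that deserve a careful word are (i) that a combinational circuit has a well-defined output on a full input assignment and that the resulting total assignment over $X$, $V$, $W$ satisfies the Tseitin formula $F$, and (ii) that satisfaction of $Q$ is determined by the $V$-part of any assignment, since $Q = Q(V)$. Both are immediate from the definitions of Section~\ref{sec:basic} and from Example~\ref{exmp:gate_cnf}.
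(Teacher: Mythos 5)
Your proof is correct and follows essentially the same route as the paper: invoke Proposition~\ref{prop:sol_impl} to get $F \wedge B \imp Q$, then evaluate this implication on the execution trace $(\pnt{x}^*,\pnt{v},\pnt{w}^*)$, which satisfies $F$, to conclude the output must falsify $B$. The only cosmetic difference is that the paper states the step as the contrapositive $\overline{Q} \imp \overline{B} \vee \overline{F}$ whereas you argue by contradiction, which is the same reasoning in different dress.
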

\begin{proof}
  From Proposition~\ref{prop:sol_impl} it follows that $F \wedge B
  \imp H$ and hence $F \wedge B \imp Q$. This entails that
  $\overline{Q} \imp \overline{B} \vee \overline{F}$. Let \pnt{v} be a
  full assignment to $V$ i.e. an input to $M$. Let
  (\pnt{x},\pnt{v},\pnt{w}) be the execution trace produced by $M$
  under the input \pnt{v}. Here \pnt{x},\pnt{w} are full assignments
  to $X$ and $W$ respectively. Suppose, \pnt{v} satisfies
  $\overline{Q}$ (and so falsifies $Q$). Then
  (\pnt{x},\pnt{v},\pnt{w}) satisfies $\overline{B} \vee
  \overline{F}$.  Since (\pnt{x},\pnt{v},\pnt{w}) is an execution
  trace, it satisfies $F$ and so falsifies $\overline{F}$.  This
  entails that (\pnt{x},\pnt{v},\pnt{w}) (and specifically \pnt{w})
  satisfies $\overline{B}$ and hence falsifies $B$.
\end{proof}

\section{Examples Of Problems That Reduce To PQE}
\label{app:using_pqe}

\noindent\tb{SAT-solving by PQE~\nf{\cite{hvc-14}}.}
Consider the SAT problem of checking if
formula \prob{X}{F(X)} is true. One can view traditional SAT-solving
as proving \ti{all} clauses redundant in \prob{X}{F} e.g. by finding a
satisfying assignment or by deriving an empty clause and adding it to
$F$. The reduction to PQE below facilitates developing an incremental
SAT-algorithm that needs to prove redundancy only for a \ti{fraction}
of clauses.

Let \pnt{x} be a full assignment to $X$ and $G$ denote the clauses of
$F$ falsified by \pnt{x}. Checking the satisfiability of $F$ reduces
to taking $G$ out of the scope of quantifiers i.e. to finding $H$ such
that \mbox{$\prob{X}{F} \equiv H \wedge \prob{X}{F \setminus
    G}$}. Since all variables of $F$ are quantified in \prob{X}{F},
the formula $H$ is a Boolean constant 0 or 1. If \mbox{$H=0$}, then
$F$ is unsatisfiable. If $H\!=\!1$, then $F$ is satisfiable because $F
\setminus G$ is satisfied by \pnt{x}.

%
%
\vspace{4pt}
\noindent\tb{Equivalence checking by PQE~\nf{\cite{fmcad16}}.}  Let
$N'(X',V',w')$ and $N''(X'',V'',w'')$ be single-output combinational
circuits to check for equivalence. Here $X',V'$ are the sets of
internal and input variables and $w'$ is the output variable of
$N'$. (Definition of $X'',V'',w''$ for $N''$ is the same.) The
reduction to PQE below facilitates the design of a \ti{complete}
algorithm able to exploit the similarity of $N'$ and $N''$.  This is
important because the current equivalence checkers exploiting such
similarity are \ti{incomplete}. If $N'$ and $N''$ are not ``similar
enough'', e.g. they have no functionally equivalent internal points,
the equivalence checker invokes a complete (but inefficient) procedure
ignoring similarities between $N'$ and $N''$.

Let $\mi{eq}(V',V'')$ specify a formula such that
\mbox{$\mi{eq}(\pnt{v}\,',\pnt{v}\,'')$ = 1} iff $\pnt{v}\,' =
\pnt{v}\,''$ where $\pnt{v}\,', \pnt{v}\,''$ are full assignments to
$V'$ and $V''$ respectively. Let formulas $G'(X',V',w')$ and
$G''(X'',V'',w'')$ specify $N'$ and $N''$ respectively. (As usual, we
assume that a formula $G$ specifying a circuit $N$ is obtained by
Tseitin transformations~\cite{tseitin}, see
Section~\ref{sec:prop_gen}.)
Let $h(w',w'')$ be a formula obtained by taking \ti{eq} out of
\prob{Z}{\mi{eq} \wedge G' \wedge G''} where $Z = X'\cup V' \cup X''
\cup V''$.  That is \prob{Z}{\mi{eq} \!\wedge\!G'\!\wedge\!G''}
$\equiv$ $h \wedge \prob{Z}{G' \wedge G''}$.  If $h \imp (w' \equiv
w'')$, then $N'$ and $N''$ are equivalent. Otherwise, $N'$ and $N''$
are inequivalent, unless they are identical constants
i.e. $w'\!\equiv\!w''\!\equiv 1$ or \mbox{$w'\!\equiv
  w''\!\equiv\!0$}. It is formally proved in~\cite{fmcad16} that the
more similar $N',N''$ are (where similarity is defined in the most
general sense), the easier taking $\mi{eq}$ out of \prob{Z}{\mi{eq}
  \wedge G' \wedge G''} becomes.

%
%
\vspace{4pt}
\noindent\tb{Model checking by PQE~\nf{\cite{mc_no_inv2}}.}  One can
use PQE to find the reachability diameter of a sequential circuit
without computing the set of all reachable states. So, one can prove
an invariant by PQE without generating a stronger invariant that is
inductive.
Let $T(S',V,S'')$ denote the transition relation of a sequential
circuit $N$ where $S',S''$ are the present and next state variables
and $V$ specifies the (combinational) input variables. Let $I(S)$
specify the initial states of $N$.  For the sake of simplicity, we
assume that $N$ can stutter i.e.  for every state \pnt{s} there exists
a full assignment \pnt{v} to $V$ such that
$T(\pnt{s},\pnt{v},\pnt{s})=1$, (Then the sets of states reachable in
$m$ transitions and \ti{at most} $m$ transitions are identical. If $T$
has no stuttering, it can be easily introduced by adding a variable to
$V$.)

Let \di{I,T} denote the \ti{reachability diameter} for initial states
$I$ and transition relation $T$.  That is every state of the circuit
$N$ can be reached in at most \di{I,T} transitions. Given a number
$m$, one can use PQE to decide if \mbox{$\di{I,T} < m$}. This is done
by checking if $I_1$ is redundant in \linebreak\prob{\Abs{m}}{I_0
  \wedge I_1 \wedge T_m}. Here $I_0$ and $I_1$ specify the initial
states of $N$ in terms of variables of $S_0$ and $S_1$ respectively,
$\Abs{m} = S_0 \cup V_0 \cup \dots \cup S_{m-1} \cup V_{m-1}$ and $T_m
= T(S_0,V_0,S_1) \wedge \dots \wedge T(S_{m-1},V_{m-1},S_m)$.  If
$I_1$ is redundant, then $\di{I,T} < m$.

The idea above can be used, for instance, to prove an invariant $P$
true in an IC3-like manner (i.e. by constraining $P$) but without
generating an inductive invariant. To prove $P$ true, it suffices to
constrain $P$ to a formula $H$ such that\linebreak a) $I \imp H \imp
P$, b) $\di{H,T} < m$ and c) no state falsifying $P$ can be reached
from a state satisfying $H$ in $m\!-\!1$ transitions. The conditions
b) and c) can be verified by PQE and bounded model checking~\cite{bmc}
respectively.  In the special case where $H$ meets the three
conditions above for $m=1$, it is an \ti{inductive invariant}.

\section{Deciding If A Property Is Unwanted}
\label{app:unw_props}
In this appendix, we give the main idea of a procedure for deciding if
a property is unwanted. We reuse the notation of
Section~\ref{sec:prop_gen}. That is, we consider property generation
for a combinational circuit $M(X,V,W)$ specified by formula
$F(X,V,W)$. Here $X,V,W$ denote the sets of the internal, input, and
output variables of $M$ respectively. For the sake of simplicity, we
consider generation of a property $H(W)$ depending only on output
variables of $M$. Such a property can be obtained by taking a clause
$C$ out of formula \Prob{X}{V}{F} (where only variables of $W$ are not
quantified).  So, $\Prob{X}{V}{F} \equiv H \wedge \Prob{X}{V}{F
  \setminus \s{C}}$.

Note that every clause of $H$ is a property too.  $H$ is an unwanted
property if and only if one of its clauses is an unwanted property.
Let $Q \in H$ be a single-clause unwanted property (and so $M$ has a
bug).  This property says that $M$ never outputs an assignment
\ti{falsifying} $Q$. To prove $Q$ unwanted one needs to find a test
\pqnt{v}{be} for which a correct version of $M$ would produce an
output falsifying $Q$. Here 'be' stands for 'bug-exposing'. (If $Q$ is
a \ti{desired} property, such \pqnt{v}{be} cannot be produced. So, the
failure to find \pqnt{v}{be} is an argument in favor of $Q$ being a
desired property.)

A test \pqnt{v}{be} can be found as follows. Since $F$ implies $Q$,
the formula $F \wedge \overline{Q}$ is unsatisfiable (and so $M$
cannot produce an output falsifying $Q$). Let $F'$ be a formula
obtained by removing some clauses from $F$ for which the formula $F'
\wedge \overline{Q}$ is satisfiable. Then means that the circuit $M'$
specified by $F'$ can produce an output falsifying $Q$. The intuition
here is that if $Q$ is unwanted property, the clauses removed from $F$
to obtain $F'$ relate to the buggy part of $M$. (Note that the circuit
$M'$ specified by $F'$ is non-deterministic. The reason is that $F'$ can
be satisfied by assignments (\ppnt{x}{1},\pnt{v},\ppnt{w}{1}) and
(\ppnt{x}{2},\pnt{v},\ppnt{w}{2}) with the same input \pnt{v} and
different outputs \ppnt{w}{1} and \ppnt{w}{2}).

Assume that \ti{all clauses} making up the buggy part of $M$ has been
removed from $F$ when obtaining $F'$. Then there is an assignment
(\pnt{x},\pqnt{v}{be},\pnt{w}) satisfying $F' \wedge
\overline{Q}$. That is a correct version of $M$ would produce the
assignment \pnt{w} falsifying $Q$ under the input \pqnt{v}{be}. So, to
obtain \pqnt{v}{be} one needs to build the formula $F'$ above and
examine the input part \pnt{v} of assignments satisfying $F' \wedge
\overline{Q}$.

The formula $F'$ can be obtained as follows. Recall that by our
assumption $F \setminus \s{C} \not\imp Q$ (see
Subsection~\ref{ssec:pg_gnrl_tsts} and Remark~\ref{rem:noise}).  So
the formula $F' \wedge \overline{Q}$ where $F' = F \setminus \s{C}$ is
satisfiable. However, one may not extract a test \pqnt{v}{be} from
assignments satisfying $F' \wedge \overline{Q}$ because $C$ may not be
the only clause making up the buggy part of $M$. A good heuristic for
forming $F'$ is to remove from $F$ every clause $B$ such that $F
\setminus \s{B} \not\imp Q$. (The latter means that one could obtain
the property $Q$ by taking out the clause $B$ from
\Prob{X}{V}{F}). However, the topic of obtaining formula $F'$ and
producing a test \pqnt{v}{be} needs further research.

\section{Combining PQE With Clause Splitting}
\label{app:cls_split}
In this appendix, we consider combining PQE with clause splitting
mentioned in Subsection~\ref{ssec:eff_comp}. We show that the
corresponding PQE problem of taking out a clause produced by splitting
is solved by \Eg in linear time.  We also show that if this clause is
not redundant, the solution produced by \Eg is a single-test
property.

Here we reuse the notation of Section~\ref{sec:prop_gen} but, for the
sake of simplicity, consider a single-output combinational circuit.
Let $M(X,V,w)$ be such a circuit where $X$ and $V$ specify the
internal and input variables respectively and $w$ is the output
variable of $M$. Let $F(X,V,w)$ be a formula specifying the circuit
$M$ and $C$ be a clause of $F$. Consider the case of
splitting \smallskip $C$ on \ti{all} variables of $V$. That is $C =
C_1 \wedge \dots \wedge C_{p+1}$ where $C_1 = C \vee
\overline{l(v_1)}$,\dots,\,$C_p = C \vee \overline{l(v_p)}$,\,$C_{p+1}
= C \vee l(v_1) \vee \dots \vee l(v_p)$ and $l(v_i)$ is a literal of
$v_i$ and $V = \s{v_1,\dots,v_p}$. Let $F'$ denote the formula
obtained from $F$ by replacing the clause $C$ with $C_1 \wedge \dots
\wedge C_{p+1}$. Denote by \bm{\pqnt{v}{spl}} the input assignment
falsifying the literals $l(v_1),\dots,l(v_p)$ where 'spl' stands for
'splitting'.

Consider applying \Eg to solve the PQE problem of taking the clause
$C_{p+1}$ out of \prob{X}{F'}. \smallskip \Eg starts with looking for
an assignment satisfying\linebreak $(F' \setminus \s{C_{p+1}}) \wedge
\overline{C_{p+1}}$ (to find a subspace where $F' \setminus
\s{C_{p+1}}$ does not imply $C_{p+1}$). Consider the following three
cases. The first case is that the formula above is unsatisfiable. Then
$C_{p+1}$ is trivially redundant in $F'$ and hence in \prob{X}{F'} and
\Eg terminates.

The second case is that there is an assignment
($\pnt{x},\pqnt{v}{spl},w^*$) satisfying $F'$ where \pnt{x} is a full
assignment to $X$ and $w^*$ is the output value taken by $M$ under the
input \pqnt{v}{spl}.  (Note \smallskip that any full assignment to $V$
that is different from \pqnt{v}{spl} falsifies
$\overline{C_{p+1}}$. So, any assignment satisfying $(F' \setminus
\s{C_{p+1}}) \wedge \overline{C_{p+1}}$ has to contain \pqnt{v}{spl}.)
Then formula $F'$ is satisfiable in subspace $(\pqnt{v}{spl},w^*)$ and
\Eg adds the plugging \smallskip clause $D(V,w)$ that is the longest
clause falsified by $(\pqnt{v}{spl},w^*)$. If $(F' \setminus
\s{C_{p+1}}) \wedge \overline{C_{p+1}} \wedge D$ is unsatisfiable,
then $C_{p+1}$ is redundant in \prob{X}{F'} and \Eg terminates.

The third case \smallskip occurs when there is an assignment
($\pnt{x},\pqnt{v}{spl},w^*$) satisfying\linebreak $(F' \setminus
\s{C_{p+1}}) \wedge \overline{C_{p+1}}$ where $w^*$ is the
\ti{negation} of the output value taken by $M$ under input
\pqnt{v}{spl}.  In this case, formula $F'$ is unsatisfiable in
subspace $(\pqnt{v}{spl},w^*)$. Since, $F' \setminus \s{C_{p+1}}$ is
satisfiable in this subspace, $C_{p+1}$ is \ti{not} redundant in
\prob{X}{F'}. To \ti{make} $C_{p+1}$ redundant in subspace
$(\pqnt{v}{spl},w^*)$, \Eg has to add the clause $B(V,w)$ that is the
longest clause falsified by $(\pqnt{v}{spl},w^*)$. The clause $B$ is a
solution to the PQE problem at hand i.e. $\prob{X}{F'} \equiv B \wedge
\prob{X}{F' \setminus \s{C_{p+1}}}$.

The clause $B$ above is implied by $F'$ (and hence $F$) and so, is a
\tb{property} of $M$. This property specifies the input/output
behavior of $M$ under the input \pqnt{v}{spl}. Namely, to satisfy $B$
when the variables of $V$ are assigned as in \pqnt{v}{spl}, one has to
set the variable $w$ to $\overline{w^*}$. The latter is the output
produced by $M$ under the input \pqnt{v}{spl}. So, the property $B$
specifies the behavior of $M$ under a \tb{single test}. In all three
cases above, the SAT problem considered by \Eg is solved just by
initial BCP. (The reason is that the formula at hand contains the unit
clauses produced by negating $C_{p+1}$ or those specifying the
subspace $(\pqnt{v}{spl},w^*)$.) So, \Eg solves the PQE problem above
in \tb{linear} time.

\section{Tests Specified By A Property Generated By PQE}
\label{app:tests_props}
In this appendix, we show the relation between tests specified by a
property obtained via PQE (see Subsection~\ref{ssec:tests_props}) and
those detecting stuck-at faults.  Here, we reuse the notation of
Section~\ref{sec:prop_gen}. Let $M(X,V,W)$ be a combinational circuit
where $X,V,W$ are the internal, input and output variables
respectively.  Let $F(X,V,W)$ be a formula specifying $M$.
Let $G$ be an AND gate of $M$ whose functionality is $x_3 = x_1 \wedge
x_2$. That is $x_1,x_2$ are the input variables of $G$ and $x_3$ is
its output variable.  The functionality of $G$ is specified by the
formula $C_1 \wedge C_2 \wedge C_3$ where $C_1 = \overline{x}_1 \vee
\overline{x}_2 \vee x_3$, $C_2 = x_1 \vee \overline{x}_3$, $C_3 = x_2
\vee \overline{x}_3$ (see Example~\ref{exmp:gate_cnf}). The clauses
$C_1,C_2,C_3$ are present in formula $F$. Consider taking $C_1$ out of
\prob{X}{F}. This clause makes $G$ produce the output value 1 when its
input values are 1.  (If $x_1$ and $x_2$ are set to 1, the clause
$C_1$ can be satisfied only by setting $x_3$ to 1.)

Let $H(V,W)$ be the property obtained by taking out $C_1$. That
is\linebreak $\prob{X}{F} \equiv H \wedge \prob{X}{F \setminus
  \s{C_1}}$. Let $Q(V,W)$ be a clause of $H$. As we mentioned earlier,
we assume that $H$ does not have redundant clauses i.e. those implied
by $F \setminus \s{C_1}$. Then the formula $(F \setminus \s{C_1})
\wedge \overline{Q}$ is satisfiable. Let (\pnt{x},\pnt{v},\pnt{w}) be
an assignment satisfying this formula. Note that this assignment
\ti{falsifies} $C_1$.  (Indeed, assume the contrary. Then
(\pnt{x},\pnt{v},\pnt{w}) satisfies $F$ because it already satisfies
$F \setminus \s{C_1}$. Since this assignment falsifies $Q$, we have to
conclude that $F \not\imp Q$ and hence $F \not\imp H$. So we have a
contradiction.)

The fact that (\pnt{x},\pnt{v},\pnt{w}) falsifies $C_1$ and satisfies
$F \setminus \s{C_1}$ means that one can view this assignment as an
execution trace of a faulty version \Sub{M}{flt} of $M$.  Namely, the
output $x_3$ of gate $G$ is stuck at 0 in \Sub{M}{flt}. (The clause
$C_1$ is falsified when $x_1=1,x_2=1,x_3=0$ i.e. if the gate $G$
outputs 0 when its input variables are assigned 1.)  Let
$(\pnt{x}^*,\pnt{v},\pnt{w}^*)$ be the execution trace of $M$ under
the input \pnt{v}. As we showed in Subsection~\ref{ssec:tests_props},
$\pnt{w}^*$ is different from \pnt{w}. So the input \pnt{v} exposes a
stuck-at fault by making \Sub{M}{flt} and $M$ produce different
outputs.

\section{Brief Description Of \dpqe}
\label{app:ds_pqe}
In this appendix, we give a high-level view of \dpqe and explain how
it works in \egp in more detail. A full description of \dpqe can be
found in~\cite{hvc-14}. \dpqe is based on the machinery of
D-sequents~\cite{fmcad13} ('\ti{DS}' in the name \dpqe stands for
'D-sequent'). Given a formula \prob{X}{F(X,Y)} and an assignment
\pnt{p} to $X \cup Y$, a D-sequent is a record \olds{X}{F}{p}{C}
stating that clause $C$ is redundant in \prob{X}{F} in subspace
\pnt{p}. In \egp, \dpqe is called in subspaces \pnt{y} where $F$ is
satisfiable. (Here \pnt{y} is a full assignment to $Y$.)  \dpqe
terminates upon deriving a D-sequent $(\prob{X}{F},\pnt{y}^*)
\rightarrow C$ where $\pnt{y}^* \subseteq \pnt{y}$. Such derivation
means that $C$ is proved redundant in \prob{X}{F} in subspace \pnt{y}.
Then the plugging clause $D$ falsified by $\pnt{y}^*$ is generated
where $\V{D} = \mi{Vars}(\pnt{y}^*)$.

\dpqe derives the D-sequent $(\prob{X}{F},\pnt{y}^*) \rightarrow C$
above by branching on variables of $X$. A variable is assigned a value
either by a decision or during Boolean Constraint Propagation (BCP). A
branch of the search tree goes on until an atomic D-sequent is derived
for $C$. This occurs when proving $C$ in the current subspace becomes
trivial. When backtracking, \dpqe merges D-sequents derived in
different branches using a resolution like operation called
\ti{join}. For instance, the join operation applied to D-sequents
$(\prob{X}{F},\pnt{p}') \rightarrow C$ where $\pnt{p}' =
(y_1=0,x_1=0)$ and $(\prob{X}{F},\pnt{p}'') \rightarrow C$ where
$\pnt{p}'' = (y_2=1,x_1=1)$ produces the D-sequent \olds{X}{F}{p}{C}
where \pnt{p}=$(y_1=0,y_2=1)$.

\dpqe has three situations where an atomic D-sequent is generated.
First, when $C$ is blocked in the current subspace and hence is
redundant there. Then an atomic D-sequent $(\prob{X}{F},\pnt{p})
\rightarrow C$ is generated where \pnt{p} consists of assignments that
made $C$ blocked in the current subspace. For instance, in
Example~\ref{exmp:eg_pqe+}, \dpqe would generate an atomic D-sequent
$(\prob{X}{F},(y_1\!=\!1)) \rightarrow C$.  Second, an atomic
D-sequent is generated when $C$ is satisfied by an assignment $w=b$
where $w \in X \cup Y$ and $b \in \s{0,1}$. (This can be a decision
assignment or an assignment derived from a clause during BCP.) Then an
atomic D-sequent\linebreak $(\prob{X}{F},(w=b)) \rightarrow C$ is
built.  Third, an atomic D-sequent is generated when a conflict occurs
and a conflict clause \Sub{C}{cnfl} falsified in the current subspace
is derived. Adding \Sub{C}{cnfl} to $F$ makes $C$ redundant in the
current subspace. So, an atomic D-sequent $(\prob{X}{F},\pnt{p})
\rightarrow C$ is generated where \pnt{p} is the shortest assignment
falsifying \Sub{C}{cnfl}.

\section{Experiments With HWMCC-13 Benchmarks}
\label{app:exper2}
In Section~\ref{sec:inv_gen_exper}, we described experiments with
multi-property benchmarks of the HWMCC-13 set~\cite{hwmcc13}. In this
appendix, we provide some additional information. Each benchmark
consists of a sequential circuit $N$ and invariants $P_0,\dots,P_m$
that are supposed to hold for $N$. We will refer to the invariant
\Sub{P}{agg} equal to $P_0 \wedge \dots \wedge P_m$ as the
\ti{aggregate invariant}.  We applied PQE to the generation of
invariants of $N$ that may be \ti{unwanted}.  Since every invariant
$P$ implied by \Sub{P}{agg} \ti{must} hold, the necessary condition
for $P$ to be unwanted is $\Sub{P}{agg} \not\imp P$.

Similarly to the experiment of Section~\ref{sec:fifo_exper}, we used the
formula $F_k = I(S_0) \wedge T(S_0,V_0,S_1)$ $\wedge \dots \wedge
T(S_{k-1},V_{k-1},S_k)$ to generate invariants. The number $k$ of time
frames was in the range of \mbox{$2\!\leq\!k\!
  \leq\!10$}. Specifically, we set $k$ to the largest value in this
range where $|F_k|$ did not exceed 500,000 clauses. We discarded the
benchmarks with $|F_2|\!>\!500,000$. We also dropped the smallest
benchmarks. So, in the experiments, we used 98 out of the 178
benchmarks of the set.

We describe three experiments. In every experiment, we generated
properties $H(S_k)$ by taking out a clause of \prob{\Abs{k}}{F_k}
where $\Abs{k}\!=\!S_0 \cup V_0 \cup \dots \cup S_{k-1} \cup
V_{k-\!1}$. Property\ $H$ is a \ti{local invariant} claiming that no
state falsifying $H$ can be reached in $k$ transitions. As in the
experiment of Section~\ref{sec:fifo_exper}, we took out only clauses
containing an unquantified variable (i.e a variable of $S_k$). In all
experiments, the \tb{time limit} for solving a PQE problem was set to
10 sec.

%
%
%
%
\subsection{Experiment 1}
\label{ssec:exper1}


The objective of the \ti{first experiment} was to demonstrate that
\egp could compute $H$ for realistic designs. We also showed in this
experiment that PQE could be much easier than QE (see
Section~\ref{sec:inv_gen_exper}) and that \egp outperforms \dpqe and
\Eg.  In this experiment, for each benchmark out of 98 mentioned above
we generated PQE problems of taking a clause out of
\prob{\Abs{k}}{F_k}. Some of them were trivially solved by
preprocessing. The latter eliminated the blocked clauses of $F_k$ that
were easy to identify and ran BCP launched due to the unit clauses
specifying the initial state. In all experiments, we \ti{discarded}
problems solved by preprocessing i.e. we considered only
\tb{non-trivial} PQE problems.

%
%
%
\vspace{5pt}
\begin{wraptable}{l}{1.6in}
\centering
\scriptsize
\captionsetup{justification=centering}
\caption{\small{PQE problems of the first group}}
\vspace{-5pt}
\begin{tabular}{|p{29pt}|p{25pt}|p{20pt}|p{20pt}|} \hline pqe &
 total & \multicolumn{2}{c|}{finished} \\ \cline{3-4}
 solver & probl. & num. &perc.\\ \hline
\ti{ds-pqe}     & 3,736      &   1,132     &~30\%     \\ \hline
\ti{eg-pqe}    & 3,736      &   3,296 &~88\%          \\ \hline
\ti{eg-pqe}$^+$ &  3,736     & \tb{3,329}  &~\tb{89\%}\\ \hline
\end{tabular}             
\label{tbl:only_rbp}
\end{wraptable}

Let $C$ be a clause taken out of \prob{\Abs{k}}{F_k}. We used \Eg to
partition the PQE problems we tried into two groups. \ti{The first
  group} consisted of problems for which we ran \Eg with the time
limit of 10 sec. and it never encountered a subspace \pnt{s_k} where
$F_k$ was satisfiable. Here \pnt{s_k} is a full assignment to
$S_k$. (Recall that the variables of $S_k$ are the only unquantified
variables of \prob{\Abs{k}}{F_k}.) So, in every subspace \pnt{s_k}
tried by \Eg, formula $F_k$ was either unsatisfiable or
$(F_k\!\setminus\!\s{C})\!  \imp\!C$.  \ti{The second group} consisted
of problems where \Eg encountered subspaces where $F_k$ was
satisfiable.  For either group we generated up to 40 problems per
benchmark. For some benchmarks, the total number of non-trivial
problems generated for the first or second group was under 40.  Many
PQE problems of either group had hundreds of thousands of variables.

%
%
%
\vspace{5pt}
\begin{wraptable}{l}{1.6in}
\centering
\scriptsize
\captionsetup{justification=centering}
\caption{\small{PQE problems of the second group}}
\vspace{-5pt}
\begin{tabular}{|p{29pt}|p{25pt}|p{20pt}|p{20pt}|} \hline pqe &
 total & \multicolumn{2}{c|}{finished} \\ \cline{3-4}
 solver & probl. & num. &perc.\\ \hline
\ti{ds-pqe}     & 3,094      &  464       &  15\%   \\ \hline
\ti{eg-pqe}    & 3,094      &   74         & 2\%     \\ \hline
\ti{eg-pqe}$^+$ &  3,094     & \tb{830}    & \tb{27\%}\\ \hline
\end{tabular}        
\label{tbl:ubp}
\end{wraptable}

The results for the first group are shown in
Table~\ref{tbl:only_rbp}. The first column gives the name of a PQE
solver. The second column shows the number of PQE problems in the
first group. The last two columns give the number and percentage of
problems finished in the time limit of 10 sec.
Table~\ref{tbl:only_rbp} shows that \Eg and \egp performed quite well
finishing a very high percentage of problems.  The results of \dpqe
are much poorer because it does not check if $(F_k \setminus \s{C})
\imp C$ in the current subspace i.e. if $C$ is trivially redundant.

The results for the second group are shown in Table~\ref{tbl:ubp} that
has the same structure as Table~\ref{tbl:only_rbp}. In particular, the
second column gives the number of PQE problems in the second
group. Table~\ref{tbl:ubp} shows that \Eg and \egp finished 2\% and
27\% of the problems respectively. So, \egp significantly outperformed
\Eg. The reason is that \Eg uses a satisfying assignment as a proof of
redundancy of the clause $C$ in subspace \ppnt{s}{k}.

%
\subsection{Experiment 2}
\label{ssec:exper2}
The second experiment was an extension of the first one. Its goal was
to show that PQE can generate invariants for realistic designs. For
each clause $Q$ of a local invariant $H$ generated by PQE we used \ict
to verify if $Q$ was a global invariant.  If so, we checked if
$\Sub{P}{agg} \not\imp Q$ held. To make the experiment less time
consuming, in addition to the time limit of 10 sec.  per PQE problem,
we imposed the following constraints.  First, we stopped a PQE-solver
even before the time limit if it generated more than 5 free
clauses. Second, the time limit for \ict was set to 30 sec.  Third,
instead of constraining the number of PQE problems per benchmark
(i.e. the number of single clauses taken out of \prob{\Abs{k}}{F_k})
like in the first experiment, we imposed the following two
constraints.  First, we stopped processing a benchmark as soon as the
total of 100 free clauses was generated (for all the PQE problems
generated for this benchmark). Second, we stopped processing a
benchmark even earlier if the summary run time for all PQE problems
generated for this benchmark exceeded 2,000 sec.

\begin{wraptable}{l}{3.2in}
\centering
\scriptsize
\captionsetup{justification=centering}
\caption{\small{A sample of HWMCC-13 benchmarks from\\ the experiment with \egp}}
\begin{tabular}{|p{20pt}|p{20pt}|p{20pt}|p{20pt}|p{20pt}|p{20pt}|p{17pt}|p{12pt}|p{15pt}|p{24pt}|} \hline
  name     & lat- & in-   &  time   &clau-   &\multicolumn{5}{c|}{single-clause properties} \\ \cline{6-10}
           & ches & var.      &  fra-   &ses     & gen.    & \multicolumn{3}{c|}{invariant?} & not   \\ \cline{7-9}
           &      & of   &  mes    &taken   & props   & un-  & no  &  yes & impl.   \\
           &      &   \Sub{P}{\!agg}      &         &out     &         & dec. &     &      & by \Sub{P}{\!agg}  \\ \hline

6s306  & 7,986 & ~25 & ~~2 & ~182  & ~100  & ~0 & ~94 & ~6 & ~6 \\ \hline    
6s176  & 1,566 & ~952 & ~~3 & ~31  & ~100  & ~23 & ~11 & ~66 & ~11  \\ \hline
6s428  &3,790  & ~340 & ~~4 & ~28  & ~100  & ~9  & ~5 & ~86 & ~83 \\ \hline
6s292  &3,190  & ~247 & ~~5 & ~24  & ~100  & ~41 & ~0  & ~59 & ~59 \\ \hline
6s156  & 513   & ~32  & ~~6 & ~113 & ~100  & ~0  & ~0  & ~100   & ~100   \\ \hline
6s275  & 3,196 & ~673 & ~~7 & ~20  & ~100  & ~0  & ~50 & ~50  & ~50 \\ \hline
6s325  &1,756  & ~301 & ~~8 & ~22  & ~100  & ~0  & ~1  & ~99  & ~97  \\ \hline
6s391  &2,686  & ~387 & ~~9 & ~35  & ~100  & ~1  & ~29  & ~70 & ~70 \\ \hline
6s282  &1,933  & ~20  & ~10 & ~111 & ~100  & ~0  & ~64 & ~36  & ~35 \\ \hline
\end{tabular}                
\label{tbl:sample}
\end{wraptable}

A sample of 9 benchmarks out of the 98 we used in the experiment with
\egp is shown in Table~\ref{tbl:sample}. Let us explain the structure
of this table by the benchmark 6s306 (the first line of the
table). The name of this benchmark is shown in the first column. The
second column gives the number of latches (7,986). The number of
invariants that should hold for 6s306 is provided in the third column
(25). So, the aggregate invariant \Sub{P}{\!agg} of 6s306 is the
conjunction of those 25 invariants. The fourth column shows that the
number $k$ of time frames for 6s306 was set to 2 (since
$|F_3|\!>\!500,000$).  The value 182 shown in the fifth column is the
total number of single clauses taken out of \prob{\Abs{k}}{F_k}
i.e. the number of PQE problems (where $k=2$ for 6s306).

Every free clause $Q$ generated by \egp when taking a clause out of
\prob{\Abs{k}}{F_k} was stored as a local single-clause invariant.
The sixth column shows that taking clauses out of the scope of
quantifiers was terminated when 100 free clauses (specifying 100 local
single-clause invariants) were generated.  Each of these 100 local
invariants held in $k$-th time frame. The following three columns show
how many of those 100 local invariants were true globally. \ict
finished every problem out of 100 in the time limit of 30 sec. So, the
number of undecided invariants was 0. The number of invariants \ict
proved false or true globally was 94 and 6 respectively. The last
column gives the number of global invariants \ti{not} implied by
\Sub{P}{\!agg} i.e. invariants that may be unwanted. For 6s306, this
number is 6.



\vspace{-7pt}
\subsection{Experiment 3}
\label{ssec:exper3}
To prove an invariant $P$ true, \ict conjoins it with clauses
$Q_1,\dots,Q_n$ to make $P\wedge Q_1\!\wedge \dots \wedge Q_n$
inductive.  If \ict succeeds, every $Q_i$ is an invariant. Moreover,
$Q_i$ may be an \ti{unwanted} invariant.  Arguably, the cause of
efficiency of \ict is that $P$ is often close to an inductive
invariant. So, \ict needs to generate a relatively small number of
clauses $Q_i$ to make the constrained version of $P$
inductive. However, this nice feature of \ict drastically limits the
set of invariant clauses it generates. In this subsection, we
substantiate this claim by an experiment. In this experiment, we
picked the HWMCC-13 benchmarks for which one could prove \ti{all}
predefined invariants $P_1, \dots, P_m$ within a time limit. Namely,
for every benchmark we formed the aggregate invariant \mbox{\spe =
  $P_1 \wedge \dots \wedge P_m$} and ran \ict to prove \spe true.

We selected the benchmarks that \ict solved in less than 1,000 sec.
(In addition to dropping the benchmarks not solved in 1,000 sec., we
discarded those where \spe failed because some invariants $P_i$ were
false). Let \bm{\sspe} denote \smallskip the inductive version of \spe
produced by \ict when proving \spe true.  That is, $\sspe$ is \spe
conjoined with the invariant clauses produced by \ict.

\begin{wraptable}{l}{2.4in}
\centering
\scriptsize
\captionsetup{justification=centering}
\caption{\small{Invariants of \egp and \\ \ict}}
\begin{tabular}{|p{20pt}|p{20pt}|p{22pt}|p{20pt}|p{25pt}|p{25pt}|} \hline
  name     & lat- & inva-  &\multicolumn{3}{c|}{glob. single cls. invars} \\ \cline{4-6}
           & ches &riants    & glob.    & not & not   \\
           &      & of    & inva-   & impl.   & impl.  \\
           &      & \spe &  riants & by $\spe$  &by \sspe  \\ \hline
  6s135  & 2,307  &~340 &~53 &~53    &~27 \\ \hline
  6s325  & 1,756  &~301 &~99 &~99   &~96\\ \hline
ex1    & 130    &~33  &~25 &~16    &~16 \\ \hline
ex2    & 212    &~32  &~64  &~64    &~47 \\ \hline
6s106  & 135    &~17  &~100   &~96    &~96 \\ \hline
6s256  & 3,141  &~5   &~6  &~6    &~6  \\ \hline
ex3    & 61     &~3   &~4  &~4    &~4  \\ \hline
ex4    & 63     &~3   &~1  &~1    &~1  \\ \hline
6s209  & 5,759  &~2   &~95 &~95   &~89 \\ \hline
6s113  & 994    &~1   &~19 &~16   &~16 \\ \hline
6s143  & 260    &~1   &~97  &~86   &~77  \\ \hline
6s170  & 3,141  &~1   &~13  &~13    &~13  \\ \hline
6s252  & 170   &~1   &~55  &~41   &~34 \\ \hline \hline
\tb{Total}  &       &     &     &  \tb{590}  & \tb{522} \\ \hline
\end{tabular}
\label{tbl:ic3_pqe}
\end{wraptable}

For each of the selected benchmarks we generated invariants by \egp
exactly as in Experiment 2. That is, we stopped generation of local
single clause invariants when their number exceeded 100. Then we ran
\ict to identify local invariants that were global as well.  After
that we checked which of the global invariants generated by \egp were
not implied by \spe. The difference from Experiment 2 was that we also
checked which global invariants generated by \egp were not implied by
$\sspe$.

The results of the experiment are shown in Table~\ref{tbl:ic3_pqe}.
The first three columns of this table are the same as in
Table~\ref{tbl:sample}. They give the name of a benchmark, the number
of latches and the number of invariants $P_1$,$\dots$,$P_m$ to
prove. (The actual names of examples \ti{ex1},..,\ti{ex4} in the
HWMCC-13 set are \ti{pdtvsarmultip}, \ti{bobtuintmulti},
\ti{nusmvdme1d3multi}, \ti{nusmvdme2d3multi} respectively.) The next
column of Table~\ref{tbl:ic3_pqe} shows the number of local invariants
generated by \egp that turn out to be global.  The last two columns
give the number of global invariants that were not implied by \spe and
\sspe respectively. The last row of the table shows that in 522 cases
out of 590 the invariants not implied by \spe were not implied by
\sspe either. So, in 88\% of cases, the invariant clauses generated by
\egp were \ti{different} from those generated by \ict to form \sspe.

\vspace{-5pt}
\section{Experiment With Combinational Circuits}
\label{app:symb_sim}
In this appendix, we give more information about the experiment with
property generation for combinational circuits described in
Section~\ref{sec:comb_exper}. We formed PQE problems as follows.  For
each benchmark $N$ we picked the number $k$ of time frames to unroll.
The value of $k$ ranged from 10 to 40. (For larger circuits we picked
a smaller value of $k$.) Then we unrolled $N$ for $k$ time frames to
form a combinational circuit $M_k$ and randomly generated a clause
$B(S_k)$ of 15 literals. So, $B$ depended on output variables of
$M_k$. After that, we constructed the subcircuit $M'_k$ of $M_k$ as
described in Section~\ref{sec:comb_exper}. That is, $M'_k$ was
obtained by removing the logic of $M_k$ that did not feed any output
variable present in $B$.

Let formula $F'_k$ specify the subcircuit $M'_k$. (Here we reuse the
notation of Section~\ref{sec:comb_exper}.)  For every benchmark, we
generated PQE problems of taking different clauses $C$ out of
\prob{S_{1,k}}{F'_k}. That is each PQE problem was to find $H$ such
that $\prob{S_{1,k}}{F'_k} \equiv H \wedge \prob{S_{1,k}}{F'_k
  \setminus \s{C}}$. Each clause $C$ to take out was chosen among the
clauses of $F'_k$ that contained a variable of $S_k$ (i.e. an output
variable of $M'_k$). In this way, we formed a set of 3,254 PQE
problems.  1,668 of these problems were solved by simple formula
preprocessing i.e. turned out to be trivial.  So, in the experiment we
used the remaining 1,586 non-trivial PQE problems.

%
%
\vspace{5pt}
\begin{wraptable}{l}{2.3in}
\centering
\scriptsize
\captionsetup{justification=centering}
\caption{\small{Property generation}}
\vspace{-5pt}
  \begin{tabular}{|p{29pt}|p{22pt}|p{20pt}|p{20pt}|p{20pt}|p{20pt}|} \hline
 pqe & num.  & \multicolumn{4}{c|}{properties were generated}   \\ \cline{3-6}
 solver         & of pqe  &num-   &   &  \multicolumn{2}{c|}{stronger than}     \\
                &prob-    &ber     &~~\%      & \multicolumn{2}{c|}{3-val. sim.} \\ \cline{5-6}
                &lems       &   &   & num. & ~~\%  \\ \hline
\ti{ds-pqe}     & 1,586 & 983 &~~62 &~728  &~~74    \\ \hline
\ti{eg-pqe}     & 1,586 &  450 &~~28 &~361 &~~\tb{80}   \\ \hline
\ti{eg-pqe}$^+$ & 1,586 & \tb{1,046}  &~~\tb{66} &~\tb{817} &~~78   \\ \hline 
\end{tabular}                
\label{tbl:all_prop_gen}
\end{wraptable}

The time limit for solving a PQE problem was set to 10 sec. Besides,
solving a PQE problem terminated as soon as the size of $H$ reached 5
clauses.  The results of the experiment are summarized in
Table~\ref{tbl:all_prop_gen}. The second column gives the total number
of PQE problems.  The next two columns show the number and percentage
of problems where $H$ was non-empty i.e. had at least one
clause. (Recall, that each clause of $H$ represents a property.) The
last two columns give the number and percentage of cases where a
clause of $H$ represented a property that was stronger than ones
produced by 3-valued simulation, a version of symbolic
simulation~\cite{SymbolSim}.  Consider, for instance, the last line of
the table corresponding to \egp.  For 817 out of 1,046 PQE problems
where $H$ was not empty, at least one clause of $H$ constituted a
property that could not be produced by 3-valued
simulation. Table~\ref{tbl:all_prop_gen} shows that \Eg had the
weakest results generating properties only for 28\% of problems
whereas \dpqe and \egp performed much better producing properties for
62\% and 66\% problems respectively.

\end{document}